\documentclass[submission,copyright,creativecommons,noncommercial,noderivs]{eptcs}

\hypersetup{pdfborder={0 0 0}}
\usepackage{enumerate,amsfonts,graphicx,stfloats,amssymb,amsthm, amsmath,color,multicol,yhmath}
\usepackage{tikz}
\usetikzlibrary{calc,matrix,arrows}
\usepackage[T1]{fontenc}

\newtheorem{theorem}{Theorem}
\newtheorem{lemma}[theorem]{Lemma}
\newtheorem{definition}[theorem]{Definition}
\newtheorem{corollary}[theorem]{Corollary}
\newtheorem{proposition}[theorem]{Proposition}

\newcommand{\str}[1]{{\mathfrak{#1}}}
\newcommand{\Z}{{\mathbb Z}}
\newcommand{\N}{{\mathbb N}}

\newcommand{\more}[1]{$!^\text{\textcolor{red}{[justification needed]}}$}

\newcommand{\klasa}{{\cal K}}
\newcommand{\klasap}{{\cal K}_\Phi}
\newcommand{\G}[2]{\Gamma_{#1}^{\,\mathrm{#2}}}
\newcommand{\ram}[1]{deg_{\geq 7}(#1)}


\newcommand{\PSPACE}{{\sc PSpace}}

\newcommand{\NP}{{\sc NP}}
\newcommand{\EXPTIME}{{\sc ExpTime}}
\newcommand{\NEXPTIME}{{\sc NExpTime}}
\newcommand{\APSPACE}{{\sc APSpace}}



\title{Satisfiability vs. Finite Satisfiability in Elementary Modal Logics}
\author{Jakub Michaliszyn, Jan Otop, Piotr Witkowski
\institute{Institute of Computer Science\\
University of Wroc\l{}aw\\
Wroc\l{}aw, Poland\thanks{The first author was supported by Polish Ministry of Science and Higher Education research grant N N206 371339.}}
\email{\{jmi,jotop,pwit\}@cs.uni.wroc.pl}}

\begin{document}
  \maketitle
  \begin{abstract}
We study elementary modal logics, i.e. modal logic considered over first-order definable classes of frames. The classical semantics of modal logic allows infinite structures, but often practical applications require to restrict our attention to finite structures. Many decidability and undecidability results for the elementary modal logics were proved separately for general satisfiability and for finite satisfiability \cite{HS11, KMO11, KM12, lics12}. In this paper, we show that there is a reason why we must deal with both kinds of satisfiability separately --- we prove that there is a universal first-order formula that defines an elementary modal logic with decidable (global) satisfiability problem, but undecidable finite satisfiability problem, and, the other way round, that there is a universal formula that defines an elementary modal logic with decidable finite satisfiability problem, but undecidable general satisfiability problem.
  \end{abstract}

\section{Introduction}
Modern modal logic, founded by Clarence Lewis in 1910, is a formalism that involves the use of the expressions ``possibly'' and ``necessarily''. Formally, modal logic extends propositional logic by some new constructions, of which two most important were $\Diamond \varphi$ and $\square \varphi$,
originally read as \emph{$\varphi$ is possible} and \emph{$\varphi$ is necessary}, respectively. A typical question was,
given a set of axioms $\cal A$, corresponding usually to some intuitively acceptable aspects of truth, 
what is the logic defined by $\cal A$, i.e.~which formulas are provable from $\cal A$ in a Hilbert-style system. 

One of the most important steps in the history of modal logic was the invention in the late 1950s and early 1960s of a formal semantics based on the notion of the so-called Kripke structures. Basically, a Kripke structure is a directed graph, called a \emph{frame}, together with a valuation of propositional variables.  Vertices of this graph are called \emph{worlds}. For each world truth values of all propositional variables can be defined independently. In this semantics, $\Diamond \varphi$ means  \emph{the current world is connected to some world in which $\varphi$ is true}; and $\square \varphi$, equivalent to $\neg \Diamond \neg \varphi$, means \emph{$\varphi$ is true in all worlds to which the current world is connected}.  

It appeared that there is a beautiful connection between syntactic and semantic approaches to modal logic \cite{sahlqvist}: logics defined by axioms can be often equivalently defined by restricting classes of frames. E.g., the axiom $\Diamond \Diamond P \rightarrow \Diamond P$ (\emph{if it is possible that $P$ is possible, then $P$ is possible}), is valid precisely in the class of transitive frames; the axiom $P \rightarrow \Diamond P$ (\emph{if $P$ is true, then $P$ is possible}) -- in the class of reflexive frames, $P \rightarrow \square \Diamond P$ (\emph{if $P$ is true, then it is necessary that $P$ is possible}) -- in the class of symmetric frames, and the axiom $\Diamond P \rightarrow \square \Diamond P$ (\emph{if $P$ is possible, then it is necessary that $P$ is possible}) -- in the class of Euclidean frames. 

Many important classes of frames, in particular all the classes we mentioned above, can be defined by simple first-order formulas. For a given first-order sentence $\Phi$ over the signature consisting of a single binary symbol $R$ we define $\klasap$ to be the set of those frames which satisfy $\Phi$. 

It is not hard to see that some modal logics defined by a first-order formula are undecidable. A stronger result was presented in \cite{HS96}---it was shown that there exists a universal first-order formula with equality such that the global satisfiability problem over the class of frames that satisfy this formula is undecidable. In \cite{HS11}, this result was improved --- it was shown that equality is not necessary. The proof from \cite{HS11} works also for local satisfiability. Finally, in \cite{KMO11} it was shown that even a very simple formula with three variables without equality may lead to undecidability.

Decidability for various classes of frames can be shown by employing the so-called standard translation of modal logic to first-order logic. Indeed, the satisfiability of a modal formula $\varphi$ in $\klasap$ is equivalent to satisfiability of $st(\varphi) \wedge \Phi$, where $st(\varphi)$ is the standard translation of $\varphi$. In this way, we can show that even multimodal logic is decidable in any class defined by two-variable logic \cite{fodwa}, even extended with linear order \cite{Otto} or equivalence closures of two distinguished binary relations \cite{self}. 

A number of decidability results may be obtained by adapting the results for the guarded fragment \cite{gradel}. It has been shown that many interesting extensions of this logic are decidable, including some restricted application of fixed-points \cite{fixed} and transitive closures \cite{m08} in guards. These results often can be extended for the  finite satisfiability problem \cite{barany12, KT07}.
The complexity bounds obtained this way, however, are high --- usually between {\sc ExpTime} and {\sc 2NExpTime}.

The classes of frames we mentioned earlier, i.e. transitive, reflexive, symmetric and Euclidean are decidable.
They can be defined by first-order sentences even if we further restrict
the language to universal Horn formulas. Universal Horn formulas were considered in \cite{HS08},
where a dichotomy result was proved, that the satisfiability problem for modal logic over the class of structures defined by a universal Horn formula (with an arbitrary number of variables) is either in \NP{} or  \PSPACE{}-hard.  The authors of \cite{HS08} conjectured that the problem is decidable in \PSPACE{} for all universal Horn formulas. This conjecture was confirmed in \cite{lics12}.

In case of some universal Horn formulas, decidability of corresponding modal logics
is shown in \cite{lics12} by demonstrating the finite model property, i.e. by proving that every modal formula satisfiable over a class of frame $\klasa$ has also a finite model in $\klasa$. However, it is not always possible, as it is not hard to construct a universal Horn formula formula $\Phi$, such that some modal formulas have only infinite models over $\klasap$, the class of frames satisfying $\Phi$. Assume e.g that $\Phi$ enforces irreflexivity and transitivity, and consider the following modal formula: $\Diamond p \wedge \Box \Diamond p$.

This naturally  leads to the question, whether for any universal Horn formula $\Phi$ the finite satisfiability problem for modal logic over $\klasap$ is decidable. This  question is particularly important, if one considers practical applications, in which the structures (corresponding e.g. to knowledge bases or descriptions of programs) are usually required to be finite.

Decision procedures for the finite satisfiability problem for modal and related logics are very often more complex than procedures for general satisfiability. As argued in \cite{Var97}, the model theoretic reason for the good behavior of modal logics is the tree model property. A standard technique is to unravel an arbitrary model into a (usually infinite) tree. In \cite{lics12} we also apply this idea (at least as a starting point of our constructions, as the obtained unravellings have to be sometimes modified to meet the requirements of the universal Horn formula defining the class of frames). Clearly such an approach, is not sufficient if we are interested only in finite models. However, in \cite{KM12}, the decidability of elementary modal logics defined by universal Horn formulas was extended for the finite satisfiability.

\subsection{Our contribution and related work}
In this paper, we study the following questions: Is every decidable elementary modal logic finitely decidable? Is every finitely decidable elementary modal logic decidable? 

The similar question has been studied in \cite{G06} --- they proved that there is a decidable modal logic such that the finite satisfiability problem for this logic is undecidable. However, the result in \cite{G06} is proved using G\"{o}del--L\"{o}b formula, which is known to define a class of frames that cannot be defined by a first-order formula (see \cite{BdRV01}, Example 3.9). Extensions of modal logics that are undecidable but finitely decidable are also known --- for example, the $A\bar{A}B\bar{B}$ fragment of the Halpern--Shoham logic \cite{MPS10}.

Comparing papers \cite{lics12} and \cite{KM12}, we see that there are elementary modal logics that have the satisfiability and the finite satisfiability problems in different complexity classes. For instance, for $\Phi= \forall x y z v . x R y \wedge y R z \wedge v 
R z \Rightarrow x R v$ the global finite satisfiability is \PSPACE{}-complete, while the general global satisfiability is \EXPTIME{}-complete. However, since we do not know whether \PSPACE{} $\neq$ \EXPTIME{}, we cannot formally conclude that their complexity differs. 

Our contribution consists of the two theorems stated below.

\begin{theorem}\label{thm1}
There is a universal first-order formula $\Gamma_1$ such that global satisfiability problem for the modal logic over the class of frames satisfying $\Gamma_1$ is undecidable, but it becomes decidable if we consider only finite structures.
\end{theorem}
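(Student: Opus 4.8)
The plan is to obtain undecidability of the general (infinite-admitting) global satisfiability problem by a reduction from an undecidable tiling problem, and to obtain decidability in the finite case by establishing a computable finite model property tailored to $\Gamma_1$. The whole point is that these two behaviours can coexist precisely because $\Gamma_1$ will be universal but \emph{not} Horn: this is why neither the general decidability of \cite{lics12} nor the finite decidability of \cite{KM12} applies, and why the finite case must be argued by hand.

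First I would design $\Gamma_1$ as a universal sentence over the single binary symbol $R$ (equality and disjunction allowed) that, \emph{together with} a grid-forcing modal formula, pins down a commuting grid. Since $\Gamma_1$ can only constrain the raw edge relation $R$, the two grid ``directions'' (east and north) must be introduced by the modal valuation: working with auxiliary propositions marking endpoints, the modal formula $\varphi_T$ demands globally, via $\Diamond$, that every world has an east-successor and a north-successor, so that on any model the coloured $R$-edges generate an unbounded grid. The frame sentence $\Gamma_1$ then \emph{aligns} these independently created witnesses using clauses such as $\forall x\,y\,z\,(xRy \wedge xRz \rightarrow y=z \vee yRz \vee zRy)$ (successor comparability, genuinely non-Horn), together with transitivity $\forall x\,y\,z\,(xRy\wedge yRz\rightarrow xRz)$ and irreflexivity $\forall x\,\neg(xRx)$, forcing $R$ to be a strict partial order so that no $R$-cycle exists and the grid can only progress. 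On top of this skeleton $\varphi_T$ colours worlds by tiles of $T$ and enforces the horizontal and vertical matching relations, so that $\varphi_T$ is globally satisfiable over $\klasap$ (with $\Phi=\Gamma_1$) if and only if $T$ tiles $\N\times\N$. As the latter is undecidable, general global satisfiability is undecidable.

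For the finite case I would first observe that, because $R$ is a strict partial order and the grid progresses along $R$, an infinite grid requires an infinite ascending $R$-chain, which no finite frame admits; hence the grid-forcing fragment (and every $\varphi_T$) is finitely unsatisfiable. More importantly, I would prove the \emph{uniform} statement needed for decidability: every finitely satisfiable modal formula has a $\klasap$-model whose size is bounded by a computable function of the formula. This I would obtain by a filtration argument adapted to the disjunctive clauses of $\Gamma_1$, showing that a suitable quotient of any finite model is again a $\Gamma_1$-frame of bounded size; the absence of $R$-cycles keeps the quotient well-behaved. A decision procedure then simply searches over all models up to that bound, deciding finite satisfiability.

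The main obstacle is the pair of opposing design constraints imposed on the \emph{single} universal sentence $\Gamma_1$, which may speak only about $R$ and has no existential quantifiers to spend. On the one hand its clauses must be strong enough that, jointly with the $\Diamond$-requirements, they pin down a genuinely commuting grid on infinite models, so that the tiling reduction is faithful; on the other hand they must be restrictive enough that finite models collapse into a bounded, filtration-friendly regime. The delicate point is ruling out finite \emph{toroidal} realizations of the grid: a finite model in which the grid wraps around would let one encode periodic (hence undecidable) tilings on finite frames and would wreck finite decidability. Reconciling ``undecidable on infinite frames'' with ``bounded-model and decidable on finite frames'' inside one $\forall$-sentence over $R$ is the crux of the construction.
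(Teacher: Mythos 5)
Your proposal has genuine gaps in both halves, and it misses the structural idea that makes the two halves compatible. For the undecidability half, the specific clauses you choose defeat the reduction. Your comparability clause $\forall x\,y\,z\,(xRy \wedge xRz \rightarrow y=z \vee yRz \vee zRy)$ is violated by every grid: the east- and north-successors of a point are distinct and incomparable, so the intended $\N\times\N$ model of your tiling formula does not satisfy your $\Gamma_1$, and the direction ``tiling exists $\Rightarrow$ formula satisfiable'' already fails. Worse, this clause forces every successor set to be an $R$-chain, so your frames are transitive frames with linearly ordered successor cones (a K4.3-like class); such frames cannot carry two independent directions at all, and the resulting logics are decidable, so no redesign of the modal formula alone can restore undecidability. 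Independently, your unconditional transitivity breaks faithfulness: $\Box$ then ranges over all descendants, so the $\Box$-based adjacency constraints (the analogues of $\lambda^H_{ij}$) impose spurious matching conditions, e.g.\ between $(i,j)$ and $(i+4,j)$, which share the colour $P_{(i+1 \bmod 3)\,j}$ under the mod-3 labelling. For the finite half, the crucial step is asserted but false as stated: a filtration-style quotient does not preserve irreflexivity --- identifying two worlds of equal type that are joined by an edge creates a self-loop, so the quotient leaves your frame class; acyclicity of the original model is exactly the property that quotients destroy, not the property that saves them.

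The paper's construction avoids all of this through \emph{conditionality}, which is the idea your proposal lacks. Its formula is the implication $\G{1}{two} \Rightarrow \G{1}{irr} \wedge \G{1}{trans} \wedge \G{1}{grid}$, where irreflexivity and transitivity are imposed only on \emph{ramified} worlds (out-degree $\geq 7$) and only when at least two ramified worlds exist. In a finite model satisfying the guard, transitivity plus irreflexivity on ramified worlds yields a ramified world with no ramified descendants; the substructure of worlds reachable from it still globally satisfies $\varphi$, has at most one ramified world, and hence satisfies $\Gamma_1$ \emph{vacuously}. So $\klasa_{\Gamma_1}$-FINSAT reduces to deciding whether $\varphi$ has a finite model with at most one ramified world, which the paper settles by a type-collapse argument and an alternating polynomial-space tree algorithm --- no filtration of ordered frames is ever needed. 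Meanwhile, in infinite models the guard can hold without collapse, and the grid encoding survives because transitivity constrains only the designated ramified worlds ($B_{ij}$), which are deliberately wired to all their descendants, while the grid cells themselves stay below the degree threshold. Your design imposes the order axioms on every world unconditionally, which is precisely what simultaneously kills the grid encoding on infinite frames and leaves the finite case unproven.
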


\begin{theorem}\label{thm2}
There is a universal first-order formula $\Gamma_2$ such that global satisfiability problem for the modal logic over the class of frames satisfying $\Gamma_2$ is decidable, but it becomes undecidable if we consider only finite structures.
\end{theorem}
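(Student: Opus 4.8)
The plan is to prove the two halves of Theorem~\ref{thm2} by largely independent arguments, built around a single universal sentence $\Gamma_2$ over the signature $\{R\}$ whose effect on finite frames is radically different from its effect on infinite ones. For the \emph{finite} side I would reduce the toroidal (periodic) tiling problem --- given a finite set of tile types $T$ with horizontal and vertical matching relations, does $T$ admit a tiling of some finite torus $\Z_n\times\Z_m$? --- which is well known to be undecidable, to global finite satisfiability over $\klasa_{\Gamma_2}$. Since the language is mono-modal, I would encode the two grid directions by propositional markers and let the (globally asserted) modal formula supply all the existential demands: a $\mathrm{grid}$ conjunct forces every world to have a ``horizontal'' and a ``vertical'' $R$-successor and forces the two to commute, while $\Gamma_2$ contributes only the universal closure conditions that make these successors unique and consistently coloured. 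In a \emph{finite} model a pigeonhole/counting argument then forces the horizontal and vertical chains to wrap around, so the frame collapses onto a genuine torus. Overlaying tile propositions with the usual global constraints (exactly one tile per world, horizontal and vertical matching) yields a modal formula $\psi_T$ that is globally finitely satisfiable over $\klasa_{\Gamma_2}$ iff $T$ tiles some torus, which gives undecidability of finite satisfiability.

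For the \emph{general} side I would exhibit a decision procedure for arbitrary modal formulas. The guiding principle is that, being universal, $\Gamma_2$ imposes no existential requirements of its own --- every successor and every commutation demand originates in the modal formula. I would show, following the unravelling technique of \cite{lics12}, that any model in $\klasa_{\Gamma_2}$ can be unravelled into a tree-shaped model and then repaired so as to re-establish the universal constraints of $\Gamma_2$; in the infinite setting this repair always succeeds, because the horizontal and vertical chains may run off to infinity and the grid is never forced to close. Satisfiability over $\klasa_{\Gamma_2}$ thereby reduces to satisfiability over a well-behaved (tree-like, ultimately periodic) subclass, on which the standard translation $st(\varphi)\wedge\Gamma_2$ lands in a decidable fragment of first-order logic and can be decided directly. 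In particular the formulas $\psi_T$ above are \emph{always} generally satisfiable, via the infinite plane $\N\times\N$ on which the wrap-around constraints never fire; hence general satisfiability never exposes the tiling obstruction, and the two halves are consistent.

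The main obstacle is to engineer one universal $\Gamma_2$ that is simultaneously weak enough for the unravelling-and-repair construction to go through for \emph{every} modal formula --- so that infinite satisfiability stays decidable and, crucially, the grid never has to close up --- and yet strong enough that, together with $\mathrm{grid}$, it rigidly forces a torus in \emph{every} finite model. Proving that the infinite repair always succeeds while the finite counting argument leaves no room to escape wrap-around is where the delicate choice of the Horn-like universal conjuncts defining $\Gamma_2$ must do all the work; this balancing act, rather than either reduction in isolation, is the technical crux.
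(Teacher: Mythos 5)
Your overall architecture matches the paper's: undecidability of finite satisfiability via the periodic ($\Z_m\times\Z_m$) domino problem, and decidability of general satisfiability via unravelling into tree-like models in which the universal constraints can be repaired. However, the proposal leaves unfilled exactly the point on which the theorem turns, and what you do say about that point would not work as stated. First, $\Gamma_2$ is a frame condition over the signature $\{R\}$ (with equality) only, so it cannot ``make these successors unique and consistently coloured'': horizontal/vertical markers are propositional labels, invisible to a first-order formula about the frame. The paper bridges modal labels and the frame formula through out-degree thresholds ($deg_{\geq 2}$, $deg_{\geq 4}$, $deg_{\geq 7}$): the modal formula forces labelled worlds to have prescribed out-degrees, and the universal formula conditions on those degrees. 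Second, and more centrally, you never exhibit the mechanism by which the grid-closing constraints fire in \emph{every} finite model yet can be evaded in \emph{some} infinite model of every satisfiable formula; a ``pigeonhole argument forces wrap-around'' is not something a universal frame formula does on its own. The paper's solution is to make the grid axiom conditional: $\Gamma_2 = (\G{2}{refram}\Rightarrow\G{2}{grid})\wedge\G{2}{trans}$, where $\G{2}{refram}$ is $deg_{\geq 7}(v)\wedge vRv$, i.e.\ the antecedent asks for a \emph{reflexive} ramified world. Transitivity on ramified worlds ($\G{2}{trans}$) plus finiteness forces such a world to exist (a path through finitely many ramified worlds must close into a cycle, which transitivity collapses to a self-loop), switching the grid axiom on; whereas an infinite quasi-tree (a tree with each ramified world additionally linked to all its descendants) has no reflexive world, so the antecedent never holds and only $\G{2}{trans}$ must be maintained by the repair. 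Without an explicit trigger of this kind your two halves are in tension: closure conditions strong enough to force a torus in finite models would, if unconditional, equally constrain the infinite unravellings, and your repair step has nothing concrete to work with.

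A secondary weakness: for the decidability half you defer to ``the standard translation lands in a decidable fragment,'' but the relevant formula needs seven variables for the degree constraints and is not guarded, so no off-the-shelf fragment applies. The paper instead proves a quasi-tree model property and gives a direct alternating polynomial-space procedure that guesses such a model branch by branch, tracking the types of ramified ancestors along the current path and terminating after $2^{|\varphi|}+1$ levels.
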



Our undecidability proofs involve the technique presented in \cite{lics12}. To make the presentation easier, in Section \ref{stare} we recall this technique. Then, in subsequent sections, we prove the above theorems.  We conclude with some further discussion and open questions.

\section{Preliminaries}
As we work with both first-order logic and modal logic we help the reader to distinguish them in our notation: we denote first-order formulas with Greek capital letters $\Phi$, $\Psi$, $\Gamma$, and modal formulas with Greek lower-case letters $\varphi$, $\psi$, $\tau$, $\lambda$. 
We assume that the reader is familiar with first-order logic and propositional logic. 
 
Formulas of modal logic are interpreted in  Kripke structures, which are triples of the form $\langle W, R, \pi \rangle$, where $W$ is a set of worlds, $\langle W, R \rangle$ is a directed graph called a \emph{frame}, and $\pi$ is a function that assigns to each world a set of propositional variables which are true at this world. We say that a structure  $\langle W, R, \pi \rangle$ is \emph{based} on the frame $\langle W, R \rangle$. For a given class of frames $\cal K$, we say that a structure is $\cal K$-based if it is based  on some frame from $\cal K$. 
We will use calligraphic letters ${\cal M}, {\cal N}$ to denote frames and Fraktur letters $\str{M}, \str{N}$ to denote structures. Whenever we consider a structure $\str M$, we assume that its frame is $\cal M$ and its universe is $M$ (and the same holds for other letters). 

The semantics of modal logic is defined recursively. A modal formula $\varphi$ is (locally) \emph{satisfied} in a world $w$ of a model $\str M=\langle W, R, \pi \rangle$,
denoted as
${\str  M}, w \models \varphi$ if
\begin{enumerate}[(i)]
\item $\varphi = p$ where  $p$ is a variable and $\varphi \in \pi(w)$,
\item $\varphi = \neg \varphi_1$ and not ${\str M}, w \models \varphi$,
\item $\varphi = \varphi_1 \vee \varphi_2$ and either ${\str  M}, w \models \varphi_1$ or ${\str  M}, w \models \varphi_2$,
\item $\varphi = \varphi_1 \wedge \varphi_2$ and both ${\str  M}, w \models \varphi_1$ and ${\str  M}, w \models \varphi_2$,
\item $\varphi = \Diamond \varphi'$ and there exists a world $v\in W$ such that $(w, v) \in R$ and ${\str  M}, v \models \varphi'$,
\item $\varphi = \square \varphi'$ and for all worlds $v\in W$ such that $(w, v) \in R$ we have ${\str  M}, v \models \varphi'$.
\end{enumerate}

By ${\str  M}, w \models \varphi$  we denote that a modal formula $\varphi$ is (locally) \emph{satisfied} in a world $w$ of a model $\str M$. 
We say that a formula $\varphi$ is \emph{globally} satisfied in ${\str  M} $, denoted as ${\str  M} \models \varphi$, if for all worlds $w$ of ${\str  M}$,  we have ${\str  M}, w \models \varphi$. By $|\varphi|$ denote the length of $\varphi$.

For a given class of frames $\klasa$, we say that a formula $\varphi$ is \emph{locally} (resp.~\emph{globally}) ${\klasa}$-\emph{satisfiable} if there exists a $\klasa$-based structure ${\str M}$, and a world $w \in W$ such that ${\str M}, w \models \varphi$ 
(resp.~${\str M} \models \varphi$).

For a given formula $\varphi$, a Kripke structure $\str M$, and a world $w \in W$ we define the \emph{type} of $w$ (with respect to $\varphi$) in $\str M$ as $tp_{\str  M}^{\varphi}(w) = \{ \psi : {\str  M}, w \models \psi $ and $\psi$ is subformula of $\varphi \}$. We write $tp_{\str  M}(w)$ if the formula is clear from context. Note that $|tp_{{\str  M}}^{\varphi}(w)| \leq |\varphi|$. 

The class of universal first order sentences is defined as a subclass of first--order sentence such that each sentence is of the form $\forall \vec{x} \,\Psi(\vec{x})$, where $\Psi(\vec{x})$ is quantifier--free formula, and the only allowed relational symbols are $R$ (interpreted as the successor relation in modal logic) and $=$ (interpreted as usual). As we work only with universally quantified formulas, we often skip the quantifier prefix, e.g., write $xRx$ instead of $\forall x . xRx$.

We define the \emph{local} (resp.~\emph{global}) \emph{satisfiability problem} $\klasa$-SAT 
(resp.~global ${\klasa}$-SAT) as follows. For a given modal formula, is this formula locally (resp.~globally) ${\klasa}$-satisfiable? 
For a given first-order formula $\Phi$, we define  $\klasa_{\Phi}$ as the class of frames satisfying $\Phi$. We will be interested
in local and global $\klasa_{\Phi}$-SAT problems. The \emph{finite} (global) satisfiability problem, (global) $\klasa$-FINSAT, is defined in the same way, but we are only interested in finite structures.




Our undecidability proofs are based on the following observation. In universal first--order logic, we cannot express that some world has a large out-degree. However, we are able to define some properties of such worlds. To this end, we define an abbreviation $deg_{\geq m}(v)$ that uses the fresh variables $u^v_1, \dots, u^v_m$ as follows.

\[ deg_{\geq m}(v)=\bigwedge_{1 \leq i \leq m} (v R u^v_i) \wedge \bigwedge_{1 \leq i < j \leq m} \neg u^v_i = u^v_j \]

We say that a world is \emph{ramified} if its out-degree is at least $7$. For example, the formula $deg_{\geq 7}(v) \Rightarrow vRv$ says that all ramified worlds are reflexive. Using universally quantified first order formulas we will define transitive
relation on ramified worlds. We will then be able to distinguish finite structures from infinite ones, because any such order
forces some world in a finite structure to be reflexive and we can define an order in a non-finite structure to be irreflexive.

\subsection{Domino systems}
By $\Z_m$ we denote the set $\{0, 1, \dots, m-1\}$.

\begin{definition} 
 A \emph{domino system} is a tuple ${\cal D}=(D, D_H, D_V)$, where $D$ is a set
of domino pieces and $D_H, D_V \subseteq D \times D$ are binary
relations specifying admissible horizontal and vertical adjacencies.
We say that $\cal D$ \emph{tiles} $\N \times \N$  if
there exists a function $t:\N \times \N \rightarrow {\cal D}$ 
such that $\forall i,j \in \N$ we have $(t(i,j),t(i+ 1,j)) \in D_H$ and $(t(i,j),t(i,j+1)) \in D_V$.
Similarly, $\cal D$ tiles $\Z_m \times \Z_l$, for $m,l \in \N$, if there exists 
$t:\Z_m \times \Z_l \rightarrow {\cal D}$ such that 
$(t(i,j),t(i+1 {\scriptstyle\mod m},j)) \in D_H$ and $(t(i,j),t(i,j + 1 {\scriptstyle\mod l})) \in D_V$.
\end{definition}

The following theorem comes from \cite{Berger66, GK72} (see also \cite{BorgerGG1997} for more modern
exposition).

\begin{theorem} The following problems are undecidable: \label{doomino}
\begin{enumerate}[(i)]
\item For a given domino system $\cal D$ determine if  $\cal D$ tiles $\N \times \N$.
\item For a given domino system $\cal D$ determine if there exists $m \in \N$ such that $\cal D$ tiles $\Z_m \times \Z_m$.
\end{enumerate}
\end{theorem}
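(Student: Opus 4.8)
Since this is the classical undecidability of tiling problems, the intended justification is really the citation to \cite{Berger66, GK72}, but here I will sketch the reduction I would reconstruct. The common core of both parts is an encoding of computations of a fixed deterministic Turing machine $M$ (say, started on empty input) as tilings, with the horizontal dimension representing the tape and the vertical dimension representing time. Concretely, I would let each domino piece carry a tape symbol together with an optional head-plus-state marker, so that one row $t(\cdot, j)$ of a tiling spells out the $j$-th configuration of $M$. The horizontal relation $D_H$ would enforce local well-formedness of a configuration (at most one head marker, consistency of the symbol under the head), while the vertical relation $D_V$ would enforce that row $j{+}1$ is exactly the successor of row $j$ under the transition function: away from the head a symbol is copied verbatim, and within the three-cell window around the head the transition is applied. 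Crucially this is a purely \emph{local} constraint, which is what makes it expressible by adjacency relations in the first place.

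For part (i) I would reduce from the (undecidable) non-halting problem. A special family of ``bottom'' dominoes, admissible only in row $0$, would pin the first configuration to the initial one, and I would simply omit any domino able to sit vertically above a halting configuration. Then a tiling of the full quadrant $\N\times\N$ exists if and only if the unique computation of $M$ can be continued forever, i.e.\ if and only if $M$ does not halt. Since the non-halting problem is undecidable, so is the quadrant-tiling problem; this is in essence Berger's construction.

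For part (ii) I would reduce from the (undecidable) halting problem, exploiting that a periodic tiling is precisely what a \emph{terminating} computation yields once it is wrapped onto a torus. I would augment the tile set with ``reset'' dominoes that fire on the halting state, erase the tape, and redeposit the initial configuration, so that a finite halting run closes up into a vertically periodic column of some period $p$; padding the tape to a fixed width $w$ with blanks whose wrap-around is consistent makes the pattern periodic horizontally as well, and taking $m$ to be a common multiple of $p$ and $w$ restricts the doubly-periodic pattern to a valid tiling of $\Z_m\times\Z_m$. Hence if $M$ halts, some torus is tileable. The converse — that a torus tiling forces $M$ to halt — is the delicate direction and the step I expect to be the main obstacle: one must exclude \emph{spurious} periodic tilings that obey every local rule yet encode no genuine halting run (for instance an all-blank pattern, or a loop that never visits the initial and halting configurations). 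The remedy is to add bookkeeping dominoes — an origin marker forced to occur exactly once per period, together with constraints that compel each period to contain one initial and one halting configuration joined by a correct computation — so that torus tilings end up in bijection with halting runs of $M$. Making these auxiliary constraints simultaneously local, genuinely forced, and free of unintended models is exactly the technical heart of the Gurevich--Koryakov refinement \cite{GK72}.
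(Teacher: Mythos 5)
The paper does not actually prove this theorem: it is quoted as a known result from \cite{Berger66, GK72}, so the only question is whether your reconstruction would establish the statement as given. It would not, and the gap is fundamental rather than a matter of missing detail. In the tiling problems as defined here there is no origin constraint: a domino system is just $(D, D_H, D_V)$, and $\cal D$ tiles $\N \times \N$ as soon as \emph{some} function $t$ respects the two adjacency relations. Your part (i) argument hinges on the claim that the ``bottom'' dominoes (which you can indeed confine to row $0$ by giving them no vertical predecessor) ``pin the first configuration to the initial one.'' But nothing forces a tiling to use those dominoes, or to contain a head marker anywhere: for the natural machine tile set, the all-blank tiling (blank tape symbol, no head, copied verbatim upwards) satisfies every horizontal and vertical constraint, so a tiling of $\N \times \N$ exists whether or not $M$ halts, and the biconditional you assert is false. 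What your reduction proves is undecidability of the \emph{origin-constrained} tiling problem (given $(\mathcal{D}, d_0)$, is there a tiling with $t(0,0)=d_0$?), an older and much easier result; Wang in fact conjectured that the unconstrained problem is decidable. Closing the gap between the constrained and unconstrained versions is precisely Berger's contribution: one needs an aperiodic tile set whose hierarchical, self-similar structure is forced by the local rules alone, so that arbitrarily large computation regions must appear in \emph{every} tiling. That idea is absent from your sketch, so describing it as ``in essence Berger's construction'' is not accurate.

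The same obstruction undermines part (ii). You do acknowledge that excluding spurious periodic tilings is the delicate point, but your proposed remedy --- ``an origin marker forced to occur exactly once per period'' --- begs the question: forcing any designated tile to occur at all is exactly what plain local adjacency rules cannot do (the all-blank torus again satisfies everything), and achieving it requires the same aperiodicity/hierarchy machinery, or the recursive-inseparability argument of Gurevich and Koryakov. So for part (ii) you correctly locate where the difficulty lives but do not supply the missing idea, and for part (i) you do not flag the difficulty at all. Given that the paper treats the theorem as a black-box citation, the appropriate justification here is the citation itself; a genuinely self-contained proof would have to include the forced-hierarchy construction (Berger's, or Robinson's simplification), not just the straightforward Turing-machine-to-tiles encoding.
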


\section{Logic with undecidable SAT and FINSAT}\label{stare}

As a warm-up exercise, we recall the undecidability result from \cite{lics12}, as it will be a part of our later proofs. 
We define $\Gamma$ as
$$x R y \wedge x R u \wedge u R z \wedge deg_{\geq 2}(x) \wedge deg_{\geq 4}(u) \wedge deg_{\geq 2}(z) \Rightarrow y R z$$

\begin{theorem}\label{undie}
The global satisfiability problem and the finite global satisfiability problem for modal logic over ${\cal K}_{\Gamma}$ are undecidable.
\end{theorem}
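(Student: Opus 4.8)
The plan is to reduce the undecidable $\N \times \N$ tiling problem (Theorem \ref{doomino}(i)) to global satisfiability over $\klasa_\Gamma$, and to design the encoding so that the same modal formula is unsatisfiable over finite frames, thereby simultaneously reducing a suitable tiling problem to finite satisfiability. The key idea is already flagged in the preliminaries: the formula $\Gamma$ lets us impose transitivity-like constraints only on \emph{ramified} worlds (out-degree $\geq 7$), while leaving non-ramified worlds unconstrained. Since a universal formula cannot force any world to actually \emph{have} large out-degree, the modal formula $\varphi$ we build must itself be responsible for producing ramified worlds wherever the grid structure needs them.

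First I would set up a grid skeleton. The intended model is (a frame whose relevant part is) $\N \times \N$, where each world $(i,j)$ is a grid cell carrying a propositional variable for its domino piece, together with auxiliary variables marking the horizontal successor, the vertical successor, and whatever ``coordinate parity'' or direction bits are needed to line the grid up. The modal formula $\varphi$ (to be satisfied \emph{globally}) will force every grid world to spawn enough successors — at least $7$, using fresh witness worlds — so that it becomes ramified and hence subject to the $\Gamma$ constraint $xRy \wedge xRu \wedge uRz \wedge \ldots \Rightarrow yRz$. The subtle point is to arrange the adjacency pattern $xRy$, $xRu$, $uRz$ among grid cells and their auxiliary successors so that $\Gamma$ \emph{derives} exactly the edges that close up the grid (e.g.\ that the horizontal successor of the vertical successor coincides, via a forced edge, with the vertical successor of the horizontal successor), which is what enforces the commuting-square property of a genuine tiling. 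Using $\varphi$ one then states the local tiling constraints $D_H$ and $D_V$ as modal implications along the marked successor directions.

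Next I would argue both directions. For soundness (a tiling gives a model) I exhibit the infinite grid model, decorate it, check that $\Gamma$ holds — here the only worlds that are ramified are the ones we deliberately made ramified, and on those the derived edges are precisely the ones already present, so $\Gamma$ is satisfied — and verify $\str{M} \models \varphi$. For completeness (a model gives a tiling) I extract from any $\klasa_\Gamma$-based model of $\varphi$ a function $t : \N \times \N \to \cal{D}$: starting from a root cell, the modal constraints let me walk along horizontal and vertical successor markers, $\Gamma$ guarantees the grid squares commute, and the $D_H, D_V$ constraints guarantee adjacent cells are compatible. The heart of completeness is showing that $\Gamma$ together with $\varphi$ truly pins down the grid — i.e.\ that one cannot cheat by collapsing or re-routing edges — which is exactly where the ramification bookkeeping (counting to $7$, distinctness of witnesses) has to be done carefully.

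The main obstacle I anticipate is the interaction between the \emph{universal} (hence monotone-in-edges) nature of $\Gamma$ and the fact that we need it to behave like a \emph{partial} transitivity that holds on the grid yet does not overgenerate unwanted edges that would let a model degenerate. Concretely, the challenge is calibrating the degree thresholds and the auxiliary markers so that $\Gamma$ fires on precisely the intended triples $(x,u,z)$ and produces precisely the intended closing edges $yRz$, while the witness worlds introduced only to boost out-degree do \emph{not} accidentally satisfy the premise of $\Gamma$ and force spurious edges. Since $\Gamma$ uses the \emph{finite} out-degree thresholds $deg_{\geq 2}, deg_{\geq 4}, deg_{\geq 2}$ rather than a single uniform one, I expect the construction to assign distinct roles (grid cell vs.\ horizontal/vertical marker vs.\ padding) to worlds of distinct degrees, and the bulk of the real work will be verifying that this role assignment is forced, not merely consistent — which is standard but delicate in this style of reduction, and is what makes the proof both sound and complete for the two flavours of satisfiability at once.
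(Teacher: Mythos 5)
Your plan for the general satisfiability half is essentially the paper's argument: a modal formula whose ``followed by'' constraints create witness successors that push selected worlds over the degree thresholds of $\Gamma$, so that $\Gamma$ fires on the intended triples and forces the edges $y R z$ that close each grid square; soundness by decorating the intended grid, completeness by extracting a tiling from a homomorphic image of $\str{G}_\N$ in any model. But the finite half of the theorem is where your proposal genuinely breaks down. Your opening sentence says you will arrange for the formula to be \emph{unsatisfiable over finite frames}, and you never return to the finite case afterwards. That cannot work even in principle: if every formula in the image of your reduction had no finite model, then finite satisfiability restricted to those instances would be \emph{trivially decidable} (always answer ``no''), not undecidable. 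What the paper does instead is reduce from the second problem of Theorem \ref{doomino}: the \emph{same} formula $\tau \wedge \lambda^{\cal D}$ has a finite $\klasa_\Gamma$-based model if and only if $\cal D$ tiles $\Z_m \times \Z_m$ for some $m$. The cyclic mod-$3$ indexing of the grid predicates $P_{ij}$, $A_{ij}$, $E^k_{ij}$ (coordinates incremented modulo $3$) is exactly what makes this possible: it lets the intended grid be wrapped around a torus to produce a finite model, and conversely lets a finite model be unrolled into a periodic tiling. Without this (or some equivalent device), half of the statement remains unproved.

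A secondary, but still substantive, issue is your degree calibration. Making \emph{every} grid world spawn at least $7$ successors conflates $\Gamma$ with the ``ramified'' machinery of $\Gamma_1$ and $\Gamma_2$ (which do use threshold $7$); $\Gamma$ itself uses thresholds $2,4,2$, and this asymmetry is not incidental. If grid cells had out-degree $\geq 4$, a grid cell could play the role of $u$ in the antecedent of $\Gamma$, which would force edges such as one from the vertical successor of a cell to the \emph{second} horizontal successor; these contradict the labeling discipline, so the intended model would no longer satisfy $\Gamma$ and soundness would fail. The paper avoids this by giving each $P_{ij}$-cell exactly three successors (horizontal, vertical, and the diagonal linker $A_{ij}$), so it can instantiate $x$ or $z$ but never $u$, while only the $A_{ij}$ worlds reach degree $4$, padded by the degree-one worlds $E^k_{ij}$. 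You partially self-correct on this in your final paragraph, but the construction realizing it is left undone. Finally, note that completeness does not require showing that ``one cannot cheat by collapsing'': only a homomorphism from $\str{G}_\N$ into the model is needed, and collapse is harmless precisely because the adjacency constraints $\lambda^H_{ij}, \lambda^V_{ij}$ are phrased relative to the mod-$3$ labels rather than to the identity of the worlds.
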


We work with signatures consisting of a single binary symbol $R$, and a number of unary symbols, including $P_{ij}$, $A_{ij}$, and $E_{ij}^k$ for $i,j, k \in [0,2]$. Structures over such signatures can be naturally viewed as Kripke structures in which $R$ is the accessibility relation, and unary relations describe valuations of propositional variables. 

The structure $\str{G}_\N$ illustrated in Fig.~\ref{grid} is a model of $\Gamma$. The idea of the proof is similar to the proof of the undecidability presented in \cite{KMO11}. The main difficulty here is to enforce $\str{G}_\N$ to be a grid-like structure. This is obtained using $\Gamma$ and a modal formula, which enforces some worlds (labeled $A_{ij}$ in Fig.~\ref{grid}) to have appropiate out-degree.

\begin{figure*}
\usetikzlibrary{calc,matrix,arrows}
\tikzstyle{vertex}=[circle,draw=black,minimum size=20pt,inner sep=0pt]
\tikzstyle{dots}=[minimum size=15pt,inner sep=0pt]
\tikzstyle{dvertex}=[circle,draw=black,minimum size=15pt, inner sep=0pt]

\newcounter{orggri@a}
\newcounter{orggri@b}
\newcounter{orggri@c}

\begin{tikzpicture}[shorten >=0.5pt,->, scale=3.1]

\foreach \xx in {0,1,2}
\foreach \yy in {0,1}
{
	\setcounter{orggri@a}{\xx}
	\setcounter{orggri@b}{\xx}
	\setcounter{orggri@c}{\yy}
	\addtocounter{orggri@a}{\yy} 
	\addtocounter{orggri@b}{1}
	\addtocounter{orggri@c}{1}
	\setcounter{orggri@a}{\intcalcMod{\value{orggri@a}}{3}}
	\setcounter{orggri@b}{\intcalcMod{\value{orggri@b}}{3}}
	\setcounter{orggri@c}{\intcalcMod{\value{orggri@c}}{3}}
	\def \pos {p\arabic{orggri@a}}
	\def \xxx {\arabic{orggri@b}}
	\def \yyy {\arabic{orggri@c}}

 	\foreach \var/\x/\y/\name in 
	{
		P_{\xx\yy}/0/0/1\pos, P_{\xx\yyy}/0/1/2\pos,
		P_{\xxx\yy}/1/0/3\pos, P_{\xxx\yyy}/1/1/4\pos,
	A_{\xx\yy}/0.5/0.5/5\pos}
    \node[vertex] (G-\name) at (\x + \xx,\y +\yy) {$\var$};

	\foreach \var/\x/\y/\name in 
	{
		E_{\xx\yy}^0/0.25/0.75/8\pos,
		E_{\xx\yy}^1/0.65/0.18/9\pos,
		E_{\xx\yy}^2/0.80/0.35/7\pos}
   \node[dvertex] (G-\name) at (\x + \xx,\y+\yy) {${}_{\var}$}; 

	\foreach \from/\to in {
		1\pos/2\pos,1\pos/3\pos,2\pos/4\pos,
		3\pos/4\pos,1\pos/5\pos}
    \draw[arrows=-angle 90] (G-\from) -> (G-\to);

	\foreach \from/\to in {5\pos/8\pos, 5\pos/9\pos}
    \draw[arrows={angle 90-angle 90}] (G-\from) -> (G-\to);

	\foreach \from/\to in {5\pos/7\pos}
	\draw[arrows={angle 90-angle 90}]  
   (G-\from) edge (G-\to);

	\foreach \from/\to in {5\pos/4\pos}
	\draw[arrows={-angle 90}]  (G-\from) edge  (G-\to);
}

\def \pos {ap}
\def \xx {3.8}
\def \yy {0.5}

\foreach \var/\x/\y/\name in 
	{
	a/0/0/1ap, 
	a_{y}/0/1/2ap,
	a_{x}/1/0/3ap, 
	a_{z}/1/1/4ap,
	a_{u}/0.5/0.5/5ap}
    \node[vertex] (G-\name) at (\x + \xx,\y + \yy) {$\var$};

	\foreach \var/\x/\y/\name in 
	{
		u_1^{a_u}/0.25/0.75/8\pos,
		u_2^{a_u}/0.65/0.18/9\pos,
		u_3^{a_u}/0.80/0.35/7\pos}
   \node[dvertex] (G-\name) at (\x + \xx,\y+\yy) {${}_{\var}$};

\foreach \from/\to in {
		1\pos/2\pos,1\pos/3\pos,2\pos/4\pos,
		3\pos/4\pos,1\pos/5\pos}
    \draw[arrows=-angle 90] (G-\from) -> (G-\to);

\foreach \from/\to in {5\pos/8\pos, 5\pos/9\pos}
    \draw[arrows={angle 90-angle 90}] (G-\from) -> (G-\to);

\foreach \from/\to in {5\pos/7\pos}
	\draw[arrows={angle 90-angle 90}]  
   (G-\from) edge (G-\to);

\foreach \from/\to in {5\pos/4\pos}
	\draw[arrows={-angle 90}]  (G-\from) edge  (G-\to);

\coordinate [label=center:$\vdots$] (A) at (1.5,2.2);

\coordinate [label=center:$\cdots$] (A) at (3.3, 1);

\end{tikzpicture}
\centering
\caption{The structure ${\str{G}}_\N$ (left) and its fragment (right). The universe of ${\str{G}}_\N$ is $\N \times \N$. }\label{grid}
\end{figure*}
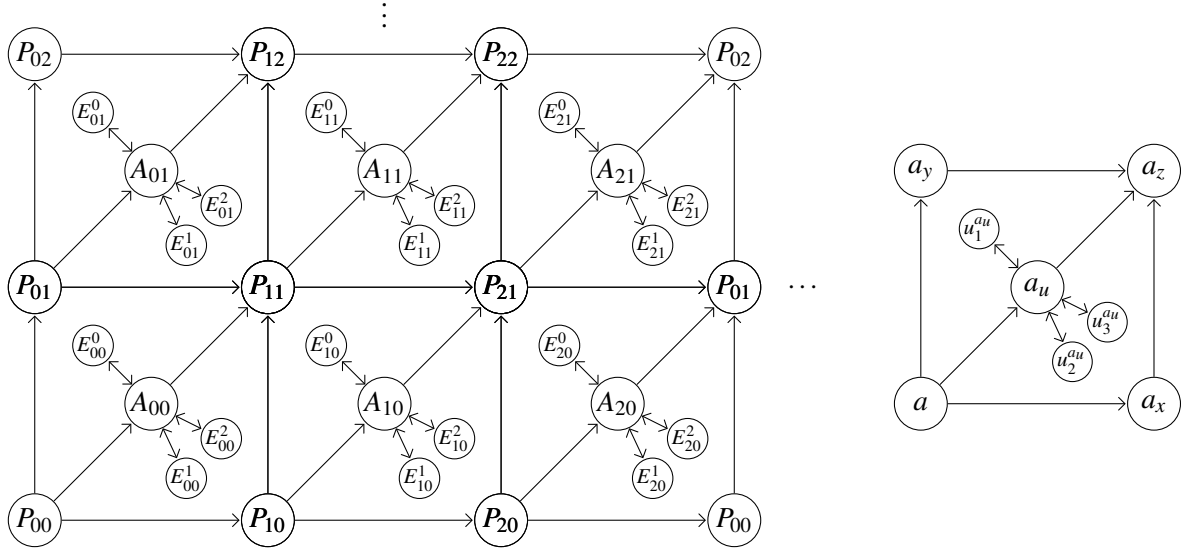

To get the undecidability we construct a modal formula $\tau$ such that any ${\cal K}_\Gamma$-based model $\str{M} \models \tau$ locally looks like a grid. To this end, we use the following definition. We say that a predicate $P$ is \emph{followed by} $Q_1$, $Q_2$, \dots, $Q_m$ in some model if each world of this model satisfying $P$ has, for all $i \in [1, m]$, a successor satisfying $Q_i$, and all successors of of this world satisfy $Q_1 \vee Q_2 \vee \dots Q_m$. This property can be expressed by a modal formula $P \Rightarrow (\bigwedge_{i \in [1, m]} \Diamond Q_i \wedge \square \bigvee_{i \in [1, m]} Q_i)$.

The formula $\tau$ says that:
\begin{enumerate}[(i)]
\item\label{tau1} each element is labeled with exactly one of predicates from the set $\{P_{ij} | i, j \in [0,2]\}$ $\cup \{A_{ij} | i, j \in [0,2]\}$  $\cup \{E_{ij}^k | i,j,k \in [0,2]\}$;
\item\label{tau3} $P_{ij}$ is followed by $P_{(i+1 \mod{3}) j}$, $P_{i(j+1 \mod{3})}$, $A_{ij}$  for all $i,j \in [0,2]$;
\item\label{tau4} $A_{ij}$ is followed by $P_{(i+1\mod{3})(j+1\mod{3})}$, $E_{ij}^0$, $E_{ij}^1$, $E_{ij}^2$  for all $i,j \in [0,2]$;
\item\label{tau5} $E_{ij}^k$ is followed by $A_{ij}$  for all $i,j,k \in [0,2]$.
\end{enumerate}

All these properties are easy to express in modal logic. Observe that each non-empty model of this formula contains a world satisfying $P_{00}$. If we consider now any world $a$ satisfying, e.g., $P_{00}$ in a model, we see that the property (\ref{tau3}) of $\tau$ enforces the existence of its horizontal successor $a_x$ satisfying $P_{10}$, its vertical  successor $a_{y}$ satisfying $P_{01}$ and its upper-right diagonal successor $a_u$ satisfying $A_{00}$ (see Fig.~\ref{grid}). By (\ref{tau4}), the element satisfying $A_{00}$ has four successors, including $a_z$ satisfying $P_{11}$. 
It should be clear that when we instantiate $\Gamma$ with the worlds $a$, $a_x$ (or $a_{y}$), $a_u$, and $a_z$, then the antecedent of the $\Gamma$ is satisfied, and that implies the edges from $a_x$ and $a_y$ to $a_z$ (see Fig.~\ref{grid}). Therefore, there is a homomorphism from $\str{G}_\N$ to any model of $\Gamma$.

Now, for a given domino system ${\cal D} = (D, D_H, D_V)$ we define $$\lambda^{\cal D}
=\lambda_0 \wedge \bigwedge\limits_{0 \le i,j \le 2} (\lambda_{ij}^H \wedge \lambda_{ij}^V).$$ 
For every $d \in D$ we use a fresh propositional letter $P_d$. Formula  
$\lambda_0$ says that each world satisfying some $P_{ij}$ contains a domino piece, $\lambda^H_{ij}$ and $\lambda^V_{ij}$
say that pairs of elements satisfying horizontal and vertical adjacency relations respect $D_H$ and $D_V$, respectively.
$$\lambda^H_{ij}=\bigwedge_{d \in D} ((P_d  \wedge P_{ij}) \rightarrow \square
 (P_{(i+1 \mod 3)j} \rightarrow \hspace*{-10pt} \bigvee_{d': (d,d') \in D_H} \hspace*{-10pt} P_{d'})), $$
$$\lambda^V_{ij}=\bigwedge_{d \in D} ((P_d  \wedge P_{ij}) \rightarrow \square
 (P_{i(j+1 \mod 3)} \rightarrow \hspace*{-10pt} \bigvee_{d': (d,d') \in D_V} \hspace*{-10pt} P_{d'})). $$

The undecidability follows from the following fact
\newtheorem{fact}[theorem]{Fact}
\begin{fact}
$\cal D$ tiles $\N \times \N$ if and only if there exists a $\klasa_\Gamma$-based model of $\tau \wedge \lambda^{\cal D}$.
\end{fact}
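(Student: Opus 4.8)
The plan is to prove the two directions of the biconditional separately, exploiting the grid-like structure forced by $\tau$ over any $\klasa_\Gamma$-based model.

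\textbf{From a tiling to a model.} Suppose $t : \N \times \N \to {\cal D}$ is a valid tiling. I would build a concrete model on the frame $\str{G}_\N$ pictured in Fig.~\ref{grid}, whose universe already encodes $\N \times \N$ together with the auxiliary $A_{ij}$- and $E_{ij}^k$-worlds. The valuation is the obvious one: a world standing for grid position $(m,n)$ is labelled $P_{ij}$ with $i = m \bmod 3$, $j = n \bmod 3$, it carries the propositional letter $P_{t(m,n)}$ recording the domino piece placed there, and the $A_{ij}$/$E_{ij}^k$-worlds get their eponymous labels and no domino letter. First I would check that $\str{G}_\N$ indeed satisfies $\Gamma$ (this is asserted in the surrounding text and follows from the edge pattern in the figure, where the only ramified configurations are the $A$-worlds with their prescribed successors). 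Then I would verify clause by clause that $\str{G}_\N \models \tau$: conditions (\ref{tau1})--(\ref{tau5}) are matched exactly by the successor structure of the figure. Finally, since $(t(m,n),t(m{+}1,n)) \in D_H$ and $(t(m,n),t(m,n{+}1)) \in D_V$ for all $m,n$, the horizontal and vertical compatibility formulas $\lambda^H_{ij}$ and $\lambda^V_{ij}$ hold at every world, so $\str{G}_\N \models \tau \wedge \lambda^{\cal D}$.

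\textbf{From a model to a tiling.} For the converse, suppose $\str{M}$ is any $\klasa_\Gamma$-based model with $\str{M} \models \tau \wedge \lambda^{\cal D}$. The discussion preceding the Fact already establishes the crucial geometric fact: there is a homomorphism $h : \str{G}_\N \to \str{M}$. The idea is that $\tau$ forces, from any chosen $P_{00}$-world, horizontal, vertical and diagonal successors with the right $P$-labels, and then instantiating $\Gamma$ at the worlds $a,a_x,a_u,a_z$ (and at $a,a_y,a_u,a_z$) forces the closing edges $a_xRa_z$ and $a_yRa_z$ that make each unit cell commute. Iterating this construction grows the full grid inside $\str{M}$. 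I would then define $t : \N \times \N \to {\cal D}$ by reading off, at each grid position $(m,n)$, the unique domino letter $P_d$ true at the image world $h(m,n)$; clause (\ref{tau1}) and $\lambda_0$ guarantee this $d$ exists and is unique. It remains to check admissibility: because $h(m{+}1,n)$ is an $R$-successor of $h(m,n)$ carrying $P_{(i+1 \bmod 3)j}$, the formula $\lambda^H_{ij}$ evaluated at $h(m,n)$ forces $(t(m,n),t(m{+}1,n)) \in D_H$, and symmetrically $\lambda^V_{ij}$ gives the vertical constraint. Hence $t$ is a valid tiling of $\N \times \N$.

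\textbf{Main obstacle.} The routine bookkeeping is in the valuation and the clause-by-clause checks; the genuinely delicate point is the existence of the homomorphism $h$, i.e.~arguing that the local ``closing-of-the-square'' behaviour composes coherently into a globally consistent $\N \times \N$ grid rather than collapsing or drifting. Concretely, one must be sure that following the horizontal-then-vertical successors and the vertical-then-horizontal successors out of a cell lands on worlds carrying the \emph{same} $P_{ij}$-label, so that the diagonal $A$-world and its $P_{(i+1)(j+1)}$-successor $a_z$ are shared by both paths; this is exactly what the two instantiations of $\Gamma$ secure, since they force both $a_x$ and $a_y$ to point to the common $a_z$. I would make this precise by an induction on the grid coordinates, maintaining as the inductive invariant that the image of each already-constructed cell satisfies the correct $P_{ij}$-type, and I would lean on the $\bmod\,3$ periodicity of the labels so that only finitely many predicates are needed to pin the pattern down. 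Once the homomorphism is in hand, transporting the domino letters back along $h$ and reading off the adjacency constraints from $\lambda^{\cal D}$ is straightforward.
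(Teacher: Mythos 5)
Your proposal is correct and follows essentially the same route as the paper: the ``only if'' direction decorates $\str{G}_\N$ with the letters $P_{t(i,j)}$, and the ``if'' direction reads a tiling off the homomorphism $h\colon \str{G}_\N \to \str{M}$ provided by the discussion preceding the Fact, checking admissibility via $\lambda^H_{ij}$ and $\lambda^V_{ij}$. The only (harmless) deviation is your claim that the domino letter at $h(m,n)$ is \emph{unique}: $\lambda_0$ guarantees only that at least one $P_d$ holds there, which suffices since one may simply pick any such $d$, exactly as the paper does.
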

\begin{proof}
Proof of ``only if'' part.
Let $t$ be a tiling of $\N \times \N$. We construct $\str{M'}$ by extending the labeling of $\str{G}_\N$ in such a way that for every $i,j \in \N$ the element $a_{i,j}$ satisfies $P_{t(i,j)}$. It is readily checked that $\str{M'}$ is as required.

Proof of ``if'' part.
 Let $\str{M}$ be a $\klasa_\Gamma$-based model of $\tau \wedge \lambda^{\cal D}$ and $f$  be a homomorphism from $\str{G}_\N$ into $\str{M}$. We define a tiling $t: \N \times \N \to D$ by setting $t(i,j)=d$ for such $d$ that $f(i,j)$ satisfies $P_d$ (there is at least one such $d$ owing to $\tau$). The formulas $\lambda^H_{ij}, \lambda^V_{ij}$ imply that $t$ is a correct tiling. 
\end{proof}
Finite case is similar --- we show that $\cal D$ tiles $\Z_m \times \Z_m$ for some $m$ if and only if there exists a finite $\klasa_\Gamma$-based model of $\tau \wedge \lambda^{\cal D}$ (see \cite{lics12}).

\section{Undecidable logic that is finitely decidable}
In this section, we prove Theorem \ref{thm1}. But before we define $\Gamma_1$, we provide some intuitions.

Remark that the intended model ${\str{G}}_\N$ of $\Gamma$ from Section \ref{stare} does not contain directed cycles with more that two worlds. Let us present the main idea using bimodal logic, i.e. we use also modalities $\Diamond ', \square '$ that are interpreted as a relation $R'$ in frame. We define $\Gamma_{bi} = \Gamma \wedge \G{bi}{oc}$, where $\G{bi}{oc}$ says that $R'$ is a strict linear order that contains
non-symmetric edges from $R$, i.e. $xRy \wedge \neg yRx \Rightarrow xR'y$. It is not hard to see that we can add $R'$ edges to the model $\str G_\N$ from Fig. \ref{grid} in a way such that the resulting model satisfy $\Gamma_{bi}$. 

Consider any finite model of $\Gamma_{bi}$ and some modal formula $\varphi$. The linear order defined by $R'$ has to have the greatest world, namely $v$. Since $v$ has no successors and satisfies $\varphi$ (recall that we consider the global satisfiability problem), the structure that contains only $v$ is also model of $\varphi$ and, as a substructure, of $\Gamma_{bi}$. Therefore, the logic has a singe-world model property and the finite satisfiability is decidable in \NP{}.

We cannot directly exploit this idea in unimodal case, making the only relation $R$, because it would imply that all elements of grid are connected. What we are going to do is to make $R$ transitive on the substructure that contains all ramified worlds.

\subsection{The formula}
We show a modification of the formula $\Gamma$ from Section \ref{stare}, such that 
the global satisfiability remains undecidable, but the finite global satisfiability becomes decidable. Define $\Gamma_1$ as \[ \G{1}{two} \Rightarrow \G{1}{irr} \wedge \G{1}{trans} \wedge \G{1}{grid}\] where \bigskip

\noindent \begin{tabular}{l c l}
 $\G{1}{two}$ &$=$& $\ram{v_1} \wedge \ram{v_2} \wedge \neg v_1 = v_2$ \\
 $\G{1}{irr}$ &$=$& $\neg xRx$ \\
 $\G{1}{trans}$ &$=$& $\ram{x}  \wedge x R y \wedge y R z  \Rightarrow x R z$ \\
 $\G{1}{grid}$ &$=$& $x R y \wedge x R u \wedge u R z \wedge deg_{\geq 2}(x) \wedge deg_{\geq 4}(u) \wedge deg_{\geq 2}(z) \Rightarrow y R z \vee x R z$
\end{tabular}
\medskip

In the following subsections we show that $\klasa_{\Gamma_1}$-SAT is undecidable, but $\klasa_{\Gamma_1}$-FINSAT is decidable. This clearly leads to the proof of Theorem \ref{thm1}. 

\subsection{The decidability}
\begin{lemma}
\label{thm1dec}
$\klasa_{\Gamma_1}$-FINSAT is decidable.
\end{lemma}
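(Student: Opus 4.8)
The plan is to reduce $\klasa_{\Gamma_1}$-FINSAT to global finite satisfiability over the much simpler class of finite frames with \emph{at most one} ramified world, and then to decide the latter by a bounded-branching argument. The starting point is the observation that a finite frame $\str{N}$ satisfies $\Gamma_1$ in one of two ways: either it contains at most one ramified world, in which case the antecedent $\G{1}{two}$ fails and $\Gamma_1$ holds \emph{vacuously}; or it contains at least two ramified worlds, in which case $\G{1}{irr}$, $\G{1}{trans}$ and $\G{1}{grid}$ must all hold. In particular, \emph{every} finite frame with at most one ramified world is $\klasa_{\Gamma_1}$-based, so if $\varphi$ is globally satisfiable in such a frame then it is finitely $\klasa_{\Gamma_1}$-satisfiable. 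The whole difficulty is the converse: turning an arbitrary finite $\klasa_{\Gamma_1}$-based model with two or more ramified worlds into one with at most one.

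This is the content of the key pruning step, and it uses only $\G{1}{irr}$ and $\G{1}{trans}$ (the grid part $\G{1}{grid}$ is needed only for the undecidability direction). Let $\str{M}$ be a finite $\klasa_{\Gamma_1}$-based model with $\str{M}\models\varphi$ globally and at least two ramified worlds; then $R$ is irreflexive and $\G{1}{trans}$-transitive (transitivity holds whenever the first world is ramified). First I would show that no ramified world lies on a directed cycle: if $x$ were ramified and $x R w_1 R \cdots R w_k = x$, then repeated application of $\G{1}{trans}$ with the ramified source $x$ yields $xRx$, contradicting $\G{1}{irr}$. Hence the relation ``$xRy$ with $x,y$ both ramified'' is acyclic on the finite set of ramified worlds, so there is a ramified world $x_0$ with \emph{no} ramified successor. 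Let $S=\{\,w : x_0 R w\,\}$. By $\G{1}{trans}$ the set $S$ is closed under $R$-successors (if $w\in S$ and $wRw'$ then $x_0Rw'$, so $w'\in S$), and by the choice of $x_0$ no world of $S$ is ramified. Thus the \emph{generated submodel} $\str{M}\restr S$ has out-degree at most $6$ everywhere, so it contains no ramified world and is $\klasa_{\Gamma_1}$-based; and since generated submodels preserve the truth of all modal formulas and $\varphi$ held at every world of $\str{M}$, we obtain $\str{M}\restr S\models\varphi$ globally. This produces a finite $\klasa_{\Gamma_1}$-based model of $\varphi$ with \emph{no} ramified world, which is in particular a frame of the target form and completes the reduction.

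It remains to decide, for a given $\varphi$, whether it is globally satisfiable in a finite frame with at most one ramified world. Here I would prove a small-model property by a standard type (Hintikka-set) construction: whether a world satisfies the subformulas of $\varphi$ depends only on its propositional valuation together with the \emph{set} of types realized among its successors, and there are at most $2^{|\varphi|}$ types. A non-ramified world may therefore be assumed to have at most $6$ distinct successor types, while the single admissible ramified world is treated as one distinguished node whose successor set, once restricted to one witness per modal subformula, is also bounded. Searching over all coherent type-skeletons of this bounded shape (one world per type for the non-ramified part, plus one distinguished node that may carry higher out-degree) is a terminating procedure: every such skeleton unwinds to a genuine finite model with at most one ramified world that globally satisfies $\varphi$, and conversely every such model gives rise to a skeleton of this form. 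I expect the main obstacle to be exactly this last step, namely setting up the coherence conditions so that the distinguished high-degree node interacts correctly with the bounded-degree type nodes --- in particular so that each world computes its $\square$- and $\Diamond$-subformulas correctly from its chosen successor types (remembering that discarding successors can falsify a $\Diamond$ and can truthify a $\square$) --- and then checking that the resulting bound on the number of worlds is genuinely effective.
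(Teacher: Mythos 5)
Your pruning step is essentially the paper's own: you use $\G{1}{irr}$ and $\G{1}{trans}$ plus finiteness to find a ramified world $x_0$ with no ramified successor (hence, by $\G{1}{trans}$, no ramified descendant), and pass to the successor-closed generated submodel, which satisfies $\Gamma_1$ vacuously and still globally satisfies $\varphi$; the only cosmetic difference is that the paper keeps $x_0$ (one ramified world) while you drop it (none). Where you genuinely diverge is the residual decision problem, global finite satisfiability in frames with at most one ramified world. The paper proves this equivalent to the existence of a possibly infinite \emph{tree} model in which all ramified worlds share a single type (unravelling one way, a selection-function quotient by types the other way), and then decides the tree condition with an alternating polynomial-space procedure, which yields an \EXPTIME{} upper bound. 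You instead aim for a small-model property via Hintikka-type skeletons and exhaustive search. That route is sound, and the obstacle you flag --- coherence conditions ensuring that chosen successor types compute $\Diamond$/$\square$ values correctly --- is not a real one: the clean way to discharge it is exactly the paper's quotient, applied \emph{directly} to your pruned finite model (no unravelling is needed; the paper unravels only because its algorithm walks a tree). Identify worlds of equal type, fix a representative $f(v)$ per class, and put an edge from $v_1$ to $v_2$ iff $f(v_1)$ has a successor of type $f(v_2)$; a standard induction shows every class has the type of its representative, and the out-degree of a class is at most that of its representative, so a model with at most one ramified world quotients to a model of $\varphi$ with at most one ramified world and at most $2^{|\varphi|}$ worlds. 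This gives the effective size bound you were unsure about, and brute-force enumeration then decides the problem. Two smaller remarks: your extra thinning of the ramified world's successors to ``one witness per modal subformula'' must keep counter-witnesses for false $\square$-subformulas as well as witnesses for true $\Diamond$-subformulas, but it is unnecessary --- the ramified class may simply keep all (up to $2^{|\varphi|}$) successor types. As for what each approach buys: yours is more elementary and suffices for the lemma as stated (bare decidability), but the naive search is \NEXPTIME{}-like, whereas the paper's tree/\APSPACE{} route gives the sharper \EXPTIME{} bound.
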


\begin{proof}
We are going to show that if a modal formula has a finite model $\str M$ satisfying $\Gamma_1$, 
then it has a model containing at most one ramified world. In such a structure the formula $\G{1}{two}$ 
is false, so $\Gamma$ is trivially satisfied and the only problem will be to ensure that the structure
satisfies a modal formula.  

Assume that $\str M$ is a finite model of $\Phi$ and $\varphi$ and contains more than one ramified world.
Then, there is a ramified world that has no ramified descendant in $\str M$. Indeed, $\G{1}{trans}$ guarantees that $R$ is transitive on ramified worlds, and, by $\G{1}{irr}$, there are no cycles. By finiteness of $\str M$, there has to be a ramified world $v$ without ramified successors. Again, by $\G{1}{trans}$, it has no ramified descendants.
A substructure consisting of all worlds reachable from $v$ in $\str M$ is a model of $\Gamma_1$, $\varphi$ and it contains only one ramified world.

It remains to give an algorithm deciding the existence of a finite model of $\varphi$ and $\Gamma_1$ which has at most one ramified world.
Notice that any structure which contains at most one ramified world satisfies $\Gamma_1$, hence 
it suffices to decide the existence of a finite model of $\varphi$ which has at most one ramified world.

We are going to show that a modal formula $\varphi$ has a finite model with at most one ramified world if and only if it has a model based of a (possibly infinite) tree (not necessarily satisfying $\Gamma_1$) such that all ramified worlds in the model have the same type.

If $\varphi$ has a model with at most one ramified world, then the result of standard unraveling procedure~(see \cite{BdRV01}) applied to this model is a model based on some tree such that all ramified worlds have the same type.

Suppose that there is a model $\str M$ of $\varphi$ such that all ramified worlds in $\str M$ have the same type.  
 We define the relation $\approx$ on $M \times M$ as 
 $w_1 \approx w_2$ if and only if worlds $w_1,w_2$ have the same type in $\str M$. Let $f$ be a selection function on equivalence classes of $\approx$.We define $\str N$ as follows.
Let $N=\{ [u]_{\approx} : u \in \str M\}$ be the universe of $\str N$ and 
for all $v_1, v_2 \in N$, we set $v_1 R v_2$ in $\str N$ if and only if $f(v_1)$ has in $\str M$ a successor of type of $f(v_2)$.
Finally, the variable assignment of $v \in \str N$ is consistent the variable assignment of $f(v)$ in $\str M$. 

Clearly, $\str N$ is a finite structure.
One can show by induction that for all $v \in N, w\in M$, $w \in v$ implies $tp_{\str N}(v) = tp_{\str M}(w)$.
In particular, $\str N$ is a model of $\varphi$.
Finally, for every $v \in \str N$, the out-degree of $v$ is not greater than the out-degree of $f(v)$ in $\str M$. 
Since each ramified world in $\str M$ has the same type, there is at most one ramified world in $\str N$.
Thus, we have reduced the problem of existence of finite model of $\varphi$ to the problem of
existence of model of $\varphi$ based on tree whose all ramified worlds have the same type. We shall show an alternating algorithm working in  polynomial space deciding the latter problem. 

The algorithm constructs a model which is tree. First, the algorithm guesses a type $t$ for all ramified worlds in a model. 
Then, it guesses the type of the root and the types of its successors. Next, it
universally branches and guesses the types of successors of a given successors and so on. 
If the current node has more than $6$ successors and its type is different than $t$, it rejects. Clearly, the algorithm works in alternating polynomial space and 
it has an accepting run if and only if there is a model of $\varphi$ based on tree whose all ramified worlds have the same type.

\end{proof}

\subsection{The undecidability}
\begin{lemma}
\label{thm1und}
$\klasa_{\Gamma_1}$-SAT is undecidable.
\end{lemma}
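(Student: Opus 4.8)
The plan is to reduce the undecidable tiling problem of Theorem~\ref{doomino}(i) to $\klasa_{\Gamma_1}$-SAT, reusing almost verbatim the machinery of Section~\ref{stare}. Recall that in that section we built, from a domino system $\cal D$, a modal formula $\tau \wedge \lambda^{\cal D}$ that is $\klasa_\Gamma$-satisfiable if and only if $\cal D$ tiles $\N \times \N$, the crucial point being that any $\klasa_\Gamma$-based model admits a homomorphism from the grid $\str{G}_\N$. The obstacle is that $\Gamma_1$ is weaker than $\Gamma$ in the grid-forcing direction: its conjunct $\G{1}{grid}$ has the disjunctive conclusion $yRz \vee xRz$ rather than $yRz$, and the whole implication is now guarded by $\G{1}{two}$, so it only fires once we know two distinct ramified worlds exist. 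So the first thing I would verify is that $\str{G}_\N$, suitably equipped with valuations, is still a model of $\Gamma_1$: since $\str{G}_\N$ is irreflexive, acyclic, and has the grid adjacencies of Fig.~\ref{grid}, the premise of $\G{1}{trans}$ never produces a back-edge and the premise of $\G{1}{grid}$ is satisfied exactly where the edge $yRz$ already exists. This gives the easy ``only if'' direction: a tiling yields a $\klasa_{\Gamma_1}$-based model of $\tau \wedge \lambda^{\cal D}$ just as before.

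For the ``if'' direction, the task is to recover a grid homomorphism from an arbitrary $\klasa_{\Gamma_1}$-based model $\str M$ of $\tau \wedge \lambda^{\cal D}$. Here I must cope with the two weakenings. First, to activate $\G{1}{grid}$ I need at least two distinct ramified worlds; I would therefore strengthen the modal formula $\tau$ (call the new formula $\tau_1$) so that every model is forced to contain two ramified worlds with distinct labels, e.g.\ by adding propositional markers asserting that some $A$-world and some other $A$-world exist, or simply by noting that in any model of $\tau$ following the $P_{00} \to A_{00} \to P_{11} \to A_{11}$ chain already yields two distinct worlds of out-degree at least $7$, making $\G{1}{two}$ true throughout $\str M$. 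Second, and this is the main obstacle, $\G{1}{grid}$ now only guarantees $yRz \vee xRz$, so instantiating it at $a, a_x, a_u, a_z$ might give the spurious diagonal edge $xRz = aR a_z$ instead of the intended $a_x R a_z$. I would neutralise this by exploiting $\G{1}{trans}$ and $\G{1}{irr}$: the worlds playing the role of $x$ (the $P_{ij}$-worlds) must be arranged so that the alternative disjunct $xRz$ is incompatible with irreflexivity or acyclicity of $R$ on ramified worlds, or alternatively by a modal formula that forbids the edge-type $P_{ij} \to P_{(i{+}1)(j{+}1)}$ outright, forcing the other disjunct $a_x R a_z$.

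Concretely, the cleanest route is to add to $\tau$ a modal conjunct ruling out the diagonal edge: since $a$ satisfies some $P_{ij}$ and $a_z$ satisfies $P_{(i+1 \bmod 3)(j+1 \bmod 3)}$, I include in $\tau_1$ the clause $P_{ij} \Rightarrow \square \neg P_{(i+1\bmod 3)(j+1\bmod 3)}$ for all $i,j$, which is consistent with $\str{G}_\N$ because in the grid no $P_{ij}$-world points directly to a $P_{(i+1)(j+1)}$-world. With this clause the disjunct $xRz$ is falsified in $\str M$, so $\G{1}{grid}$ is forced to deliver $a_x R a_z$ (and symmetrically $a_y R a_z$), exactly the edges needed to reconstruct the grid homomorphism $f : \str{G}_\N \to \str M$. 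Once the homomorphism is in place, the tiling is read off from the $P_d$-labels precisely as in the Fact of Section~\ref{stare}, and correctness of the horizontal and vertical adjacencies follows from $\lambda^H_{ij}$ and $\lambda^V_{ij}$.

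Assembling these pieces, $\cal D$ tiles $\N \times \N$ iff $\tau_1 \wedge \lambda^{\cal D}$ is $\klasa_{\Gamma_1}$-satisfiable, which reduces the undecidable problem of Theorem~\ref{doomino}(i) to $\klasa_{\Gamma_1}$-SAT and proves the lemma. I expect the only genuinely delicate step to be verifying that the added modal clauses (forcing two ramified worlds and forbidding the diagonal edge) remain simultaneously satisfiable in $\str{G}_\N$ while successfully suppressing the unwanted disjunct in every $\klasa_{\Gamma_1}$-based model; everything else is a transcription of the existing argument.
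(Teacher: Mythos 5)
Your outline identifies the two obstacles correctly, and your treatment of the second one (suppressing the disjunct $xRz$ in $\G{1}{grid}$) is sound and coincides with the paper's argument: the box part of the ``followed by'' conditions in $\tau$ already forbids any edge from a $P_{ij}$-world to a $P_{(i+1 \bmod 3)(j+1 \bmod 3)}$-world, so the required anti-transitivity comes for free from the mod-3 labelling (your extra clause $P_{ij} \Rightarrow \square \neg P_{(i+1\bmod 3)(j+1\bmod 3)}$ is redundant but harmless). The genuine gap is in your treatment of $\G{1}{two}$. Your fallback claim is simply false: in a model of $\tau$ the $A_{ij}$-worlds are only guaranteed four distinct successors ($P_{(i+1\bmod 3)(j+1\bmod 3)}$ and the three $E^k_{ij}$), so no world need satisfy $\ram{v}$; indeed $\str{G}_\N$ itself has no ramified worlds at all, hence satisfies $\Gamma_1$ only vacuously, and an arbitrary $\klasa_{\Gamma_1}$-based model of $\tau \wedge \lambda^{\cal D}$ may likewise falsify $\G{1}{two}$, in which case $\G{1}{grid}$ never fires and no grid structure is forced.

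Your remaining suggestion --- strengthen $\tau$ so that some labelled worlds are forced to have out-degree at least $7$ --- cannot be implemented naively, and this is exactly where the real work of the paper lies, work your proposal never confronts: the interaction with $\G{1}{trans}$. If you pad the $A_{ij}$-worlds up to degree $7$ while keeping the box part of their ``followed by'' condition, then $\G{1}{trans}$ forces each $A_{ij}$ to be connected to all of its descendants, whose labels (for instance $P_{(i+2 \bmod 3)(j+1\bmod 3)}$, or $A$-labels of deeper cells) violate that box condition; the conjunction becomes unsatisfiable and the ``tiling $\Rightarrow$ satisfiable'' direction of your reduction dies. If instead you drop the box condition on the ramified worlds, the intended model must connect them to all their descendants (again by $\G{1}{trans}$), and then $\G{1}{grid}$ fails in the intended model: take $x$ a $P$-world, $u$ the now-ramified $A$-world of its cell, $z$ a deep descendant of $u$ with out-degree at least $2$, and $y$ any successor of $x$; neither $yRz$ nor $xRz$ holds. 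The paper resolves this tension with machinery absent from your proposal: fresh ramified worlds $B_{ij}$, padded to degree $7$ by six dummy $G^l_{ij}$-successors and subject only to \emph{existential} successor constraints, attached to the grid through degree-one $F_{ij}$-worlds which shield them from instantiations of $\G{1}{grid}$ (every predecessor $x$ of a $B$-world other than another $B$-world fails $deg_{\geq 2}(x)$, and $B$-predecessors are handled by transitivity); in the intended model $\str{G}_1$ the $B$-worlds are connected to all their descendants, and in an arbitrary model they guarantee $\G{1}{two}$ and are then deleted to recover a substructure satisfying $\Gamma \wedge \tau$. Without this (or an equivalent) construction, your reduction does not go through.
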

\begin{proof}

In order to show undecidability of the global satisfiability problem, we show that all models of  $\Gamma_1$ and 
a modal formula $\tau_1$ locally look like a grid. The formula $\tau_1$ is based on $\tau$ from Section \ref{stare}. It enforces 
that each world satisfies exactly one of the following unary predicates:
$P_{ij}, A_{ij}, B_{ij}, E_{ij}^k, F_{ij}, G_{ij}^l$ for  $i,j,k \in [0,2]$, $l \in [0,5]$.
The formula $\tau_1$ is a conjunction of the following formulas (note that conditions (\ref{tau3}) and (\ref{tau5}) are the same as in the definition of $\tau$ in  Section \ref{stare}):

\begin{enumerate}[(i)]
\item\label{tau11} each world satisfies exactly one of the unary predicates
 $P_{ij}, A_{ij}, B_{ij}, E_{ij}^k, F_{ij},G_{ij}^l,$ for $i,j,k\in [0,2], l \in [0,5]$;
\item\label{tau13} $P_{ij}$ is followed by $P_{(i+1 \mod{3}) j}$, $P_{i(j+1 \mod{3})}$, $A_{ij}$ for all $i,j \in [0,2]$;
\item\label{tau14} $A_{ij}$ is followed by $P_{(i+1\mod{3})(j+1\mod{3})}$, $F_{ij}$, $E_{ij}^0$, $E_{ij}^1$, $E_{ij}^2$  for all $i,j \in [0,2]$;
\item\label{tau17} $E_{ij}^k$ is followed by $A_{ij}$  for all $i,j,k \in [0,2]$;
\item\label{tau15} $F_{ij}$ and $G_{ij}^k$ are followed by $B_{ij}$  for all $i,j,k \in [0,2]$;
\item\label{tau16} for all $i,j \in [0,2]$, every element satisfying $B_{ij}$ has a successor satisfying $P_{(i+1 \mod 3)(j+1 \mod 3)}$
and six successors satisfying $G_{ij}^0, G_{ij}^1,\ldots G_{ij}^5$ respectively (but can also have
other successors satisfying other propositional variables).
\end{enumerate}

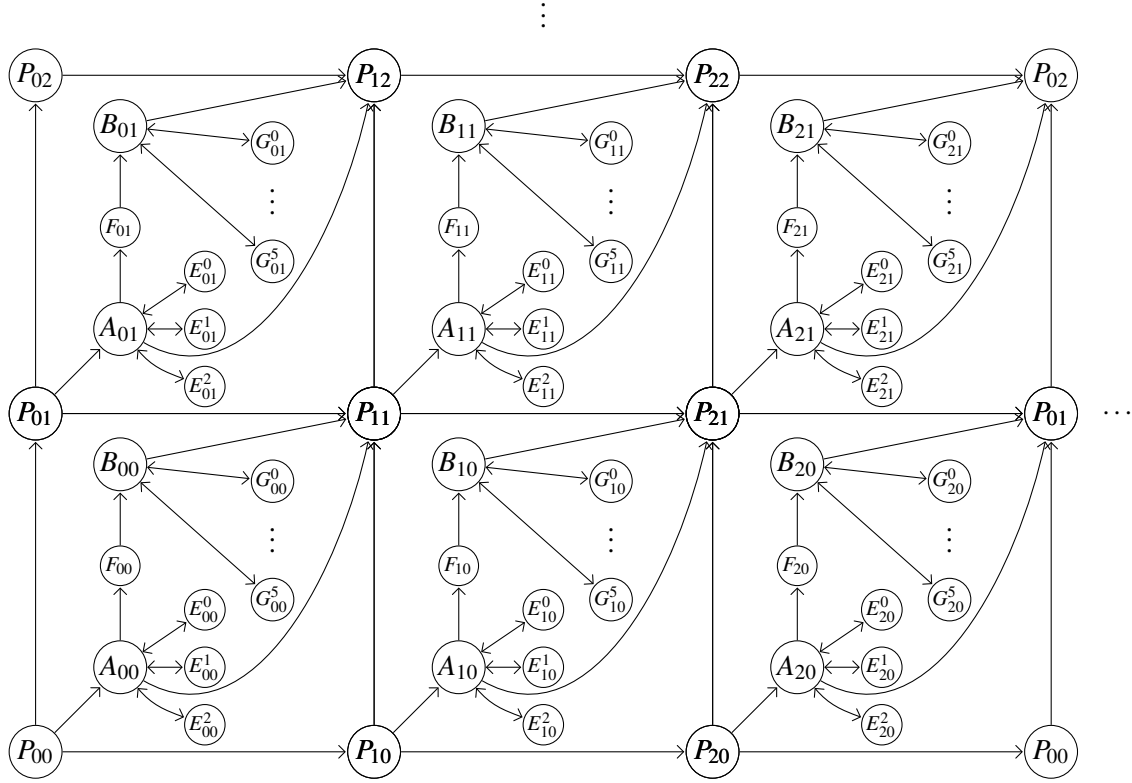
\begin{figure*}[t!]
\centering
\usetikzlibrary{calc,matrix,arrows}

\newcounter{posit@a}
\newcounter{posit@b}
\newcounter{posit@c}

\begin{tikzpicture}[shorten >=0.5pt,->, scale=4.5]

\foreach \xx in {0,1,2}
\foreach \yy in {0,1}
{
	\setcounter{posit@a}{\xx}
	\setcounter{posit@b}{\xx}
	\setcounter{posit@c}{\yy}

	\addtocounter{posit@a}{\yy} 
	\addtocounter{posit@b}{1}
	\addtocounter{posit@c}{1}

	\setcounter{posit@a}{\intcalcMod{\value{posit@a}}{3}}
	\setcounter{posit@b}{\intcalcMod{\value{posit@b}}{3}}
	\setcounter{posit@c}{\intcalcMod{\value{posit@c}}{3}}

	\def \pos {p\arabic{posit@a}}
	\def \xxx {\arabic{posit@b}}
	\def \yyy {\arabic{posit@c}}

 	\tikzstyle{vertex}=[circle,draw=black,minimum size=20pt,inner sep=0pt]
	\tikzstyle{dots}=[minimum size=15pt,inner sep=0pt]
	\tikzstyle{dvertex}=[circle,draw=black,minimum size=15pt, inner sep=0pt]

 	\foreach \var/\x/\y/\name in 
	{
		P_{\xx\yy}/0/0/1\pos,P_{\xx\yyy}/0/1/2\pos,
		P_{\xxx\yy}/1/0/3\pos,P_{\xxx\yyy}/1/1/4\pos,
		A_{\xx\yy}/0.25/0.25/5\pos,B_{\xx\yy}/0.25/0.85/6\pos}
    \node[vertex] (G-\name) at (\x + \xx,\y +\yy) {$\var$};

	\node[dots] (XX) at (0.7+\xx, 0.65+\yy) {\vdots};

	 \foreach \var/\x/\y/\name in 
	 {
		E_{\xx\yy}^0/0.5/0.42/8\pos, E_{\xx\yy}^1/0.5/0.25/9\pos,
		E_{\xx\yy}^2/0.5/0.08/7\pos, 
		F_{\xx\yy}/0.25/0.55/10\pos,
		G_{\xx\yy}^5/0.7/0.45/11\pos, G_{\xx\yy}^0/0.7/0.8/12\pos}
	\node[dvertex] (G-\name) at (\x + \xx,\y+\yy) {${}_{\var}$}; 

	\foreach \from/\to in {
		1\pos/2\pos,1\pos/3\pos,2\pos/4\pos,3\pos/4\pos,
		1\pos/5\pos,6\pos/4\pos,5\pos/10\pos,10\pos/6\pos}
    \draw[arrows=-angle 90] (G-\from) -> (G-\to);

	\foreach \from/\to in {
		5\pos/8\pos, 5\pos/9\pos,6\pos/11\pos,6\pos/12\pos}
    \draw[arrows={angle 90-angle 90}] (G-\from) -> (G-\to);

	\foreach \from/\to in {5\pos/7\pos}
	\draw[arrows={angle 90-angle 90}]  (G-\from) edge [in=160,out=310] (G-\to);

	\foreach \from/\to in {5\pos/4\pos}
	\draw[arrows={-angle 90}]  (G-\from) edge [in=258,out=330] (G-\to);
}

\coordinate [label=center:$\vdots$] (A) at (1.5,2.2);

\coordinate [label=center:$\cdots$] (A) at (3.2, 1);

\end{tikzpicture}
\centering
\caption{The structure ${\str{G}}_1$. Its universe is $\N \times \N$. Some edges from worlds satisfying $B_{ij}$ that follow from $\G{1}{trans}$ are omitted for better readability.}\label{f:extended-cell}
\end{figure*}

First, let us observe that an extension of the standard infinite directed grid is a model of $\Gamma_1$ and $\tau_1$.
Let $\str G_1$ be a grid composed of cells as in Figure \ref{f:extended-cell}. 
Additionally, in each cell, the world labeled by $B_{ij}$ is connected to all worlds reachable from the world 
in the same cell labeled by $P_{(i+1\mod{3})(j+1\mod{3})}$. It is easy to check that $\str G_1$ globally satisfies $\tau_1$.

We will check that $\str G_1$ satisfies $\G{1}{irr} \wedge \G{1}{trans} \wedge \G{1}{grid}$, the consequent of $\Gamma_1$.
Clearly, no world in $\str G_1$ is reflexive. Since for every ramified world, its successors are precisely all words reachable from a particular world, $\str G_1$ satisfies $\G{1}{trans}$.
Finally, the structure $\str G_1$ satisfies $\G{1}{grid}$.
There are two kinds of worlds in $\str G_1$ satisfying $deg_{\geq 4}(u)$, the worlds satisfying $A_{ij}$
and those satisfying $B_{ij}$. 
It is easy to check that if $u$ is a world satisfying $A_{ij}$, then
$x,y,z$ are from the same cell and $\Gamma$ is satisfied.
The worlds satisfying $B_{ij}$ have in $\str G_1$ only predecessors satisfying
$F_{ij}$, for which  $deg_{\geq 2}(x)$ fail, and those that satisfy $B_{i'j'}$. 
However, if $u$ is labeled by $B_{ij}$, it predecessor $x$ is labeled by $B_{i'j'}$, then for every successor $z$  of $u$
we have $x R z$ due to $\G{1}{trans}$.
Hence, $\G{1}{grid}$ is satisfied. Thus $\Gamma_1$ is satisfied.

Second, observe that every model of $\Gamma_1$ and $\tau_1$ contains a model of $\Gamma$ and $\tau$
as a substructure. Let $\str M$ be a model of $\Gamma_1$ and $\tau_1$  and let $\str N$ be a substructure of
$\str M$ resulting from removing worlds labeled by $B_{ij}$, $F_{ij}$ or $G_{ij}^k$. 
Since $\str M$ satisfies $\tau_1$, each world of $\str N$, except worlds satisfying $A_{ij}$,
satisfies $\tau$ and the witnesses are also in $\str N$. 
The worlds satisfying $A_{ij}$  have additional successors in $\str M$
satisfying $F_{ij}$ which are not allowed by $\tau$, but those successors do not belong to $\str N$. 
The worlds satisfying $A_{ij}$ have in $\str N$ successors satisfying $E_{ij}^k$ and 
$P_{(i+1\mod{3})(j+1\mod{3})}$. Hence, $\str N$ globally satisfies $\tau$.

The formula $\tau_1$ implies that $\str M$ contains worlds labeled by $B_{00}, B_{10}, \ldots$, which are different worlds. 
Also, it forces each world labeled by $B_{ij}$ to have seven different successors. 
Hence, for $v_1, v_2$ satisfying $B_{00}, B_{01}$ the formula $\G{1}{two}$ is satisfied.
We see that $\str M$ satisfies $\G{1}{grid}$ and, since $\G{1}{grid}$ is a universal formula and $\str N$ is a substructure of $\str M$, $\str N$ satisfies $\G{1}{grid}$.

Now, by examining all possible labeling of worlds by 
$P_{ij}, A_{ij}, E_{ij}^k$ with $i,j,k \in [0,2]$, consistent with $\tau$,
we observe that $\str N$ satisfies:
\begin{equation}
 x R u \wedge u R z \Rightarrow \neg x R z
\label{eq:anti-trans}
\end{equation}
Clearly, $\G{1}{grid}$ and \eqref{eq:anti-trans} imply $\Gamma$. Thus, $\str N$ satisfies $\Gamma$.

Hence, we have shown that each model of $\Gamma_1$ and $\tau_1$ is a grid-like structure. Thus, we may proceed as in Section \ref{stare} and employ $\lambda_{\cal D}$
to show undecidability of the global $\klasa_{\Gamma_1}$-SAT.
\end{proof}

\section{Decidable logic that is finitely undecidable}
Now we prove
Theorem \ref{thm2}, i.e. we show a formula $\Gamma_2$ 
such that the global satisfiability of modal logic over $\klasa_{\Gamma_2}$ is decidable, but its finite global 
satisfiability problem is undecidable. 

The idea follows from the tree-model property for modal logic \cite{Var97}. Observe that when we consider the finite global satisfiability problem, then modal logic lack the tree-model property. For instance, consider a formula $\Diamond \top$, and its finite model consisting of $n$ worlds. Since every world has a successor, there are at least $n$ edges in the model, and therefore it cannot be a tree.

The first-order formula that we are going to define will allow us to force the existence of a reflexive world, and the same formula will provide additional constraints for structures that contain reflexive worlds. On the other hand, in the general satisfiability case we will prove a property similar to the tree-model property and use it to prove the decidability.

\subsection{The formula}
We define $\Gamma_2$ as 
$(\G{2}{refram}\Rightarrow\G{2}{grid}) \wedge \G{2}{trans}$
where
\medskip

\noindent\begin{tabular}{l c l}
 $\G{2}{refram}$ &$=$&  $deg_{\geq 7}(v) \wedge v R v$\\
 $\G{2}{trans}$ &$=$& $\ram{x}  \wedge x R y \wedge y R z  \Rightarrow x R z$ \\
 $\G{2}{grid}$ &$=$& $x R y \wedge x R u \wedge u R z \wedge deg_{\geq 2}(x) \wedge deg_{\geq 4}(u) \wedge deg_{\geq 2}(z) \Rightarrow y R z \vee x R z$
\end{tabular}
\medskip

Note that $\G{2}{trans} = \G{1}{trans}$ and $\G{2}{grid} = \G{1}{grid}$, and that therefore $\str G_1$ satisfies $\Gamma_2$.

We are going to show that $\klasa_{\Gamma_2}$-FINSAT is undecidable, but 
$\klasa_{\Gamma_2}$-SAT is decidable, what gives us a proof of Theorem \ref{thm2}.

\subsection{The undecidability}

\begin{figure*}
\centering
\usetikzlibrary{calc,matrix,arrows}
\tikzstyle{vertex}=[circle,draw=black,minimum size=20pt,inner sep=0pt]
\tikzstyle{dots}=[minimum size=15pt,inner sep=0pt]
\tikzstyle{dvertex}=[circle,draw=black,minimum size=15pt, inner sep=0pt]

\begin{tikzpicture}[shorten >=0.5pt,->, scale=2.7]

\def \last {2.7}
\foreach \xx/\xxm/\xxxm/\xn in {0/0/1/0,1/1/2/1,2/2/0/2,4/1/2/4}
\foreach \yy/\yym/\yyym/\yn in {0/0/1/0,1/1/2/1,\last/1/2/3}
{
	\def \pos {p\xn b\yn}

 	\foreach \var/\x/\y/\name in 
	{
		P_{\xxm\yym}/0/0/1\pos,P_{\xxm\yyym}/0/1/2\pos,
		P_{\xxxm\yym}/1/0/3\pos,P_{\xxxm\yyym}/1/1/4\pos,
		A_{\xxm\yym}/0.27/0.27/5\pos,B_{\xxm\yym}/0.73/0.73/6\pos}
  	\node[vertex] (G-\name) at (\x + \xx,\y +\yy) {$\var$};

	 \foreach \var/\x/\y/\name in 
	 {
		F_{\xxm\yym}/0.5/0.5/10\pos}
	\node[dvertex] (G-\name) at (\x + \xx,\y+\yy) {${}_{\var}$};

	\foreach \from/\to in {
		1\pos/2\pos,1\pos/3\pos,2\pos/4\pos,3\pos/4\pos,
		1\pos/5\pos,6\pos/4\pos,5\pos/10\pos,10\pos/6\pos}
	\draw[arrows=-angle 90] (G-\from) -> (G-\to);

	\foreach \from/\to in {5\pos/4\pos}
	\draw[arrows={-angle 90}]  (G-\from) edge [in=250,out=10] (G-\to);
}

\foreach \yy in {0,1,3}
	\draw[arrows={-angle 90}, loosely dashed]  (G-4p4b\yy) edge [in=10,out=170] (G-2p0b\yy);

\foreach \yy in {0,3}
	\draw[arrows={-angle 90}, loosely dashed]  (G-3p4b\yy) edge [in=350,out=190] (G-1p0b\yy);

\foreach \yy in {0,4} 
	\draw[arrows={-angle 90}, loosely dashed]  (G-2p\yy b3) edge [in=101,out=259] (G-1p\yy b0);

\foreach \yy in {0,1,2,4} 
	\draw[arrows={-angle 90}, loosely dashed]  (G-4p\yy b3) edge [in=79,out=281] (G-3p\yy b0);

\coordinate [label=center:$\vdots$] (A) at (1.5,2.4);
\coordinate [label=center:$\cdots$] (A) at (3.5, 1);
\coordinate [label=center:$\adots$] (A) at (3.5, 2.4);
\end{tikzpicture}
\caption{The finite structure ${\str G}_2^m$. Edges from worlds satisfying $B_{ij}$ and worlds satisfying $E_{ij}^k$ and $G_{ij}^l$ are omitted for better readability. }\label{f:finite-model}
\end{figure*}
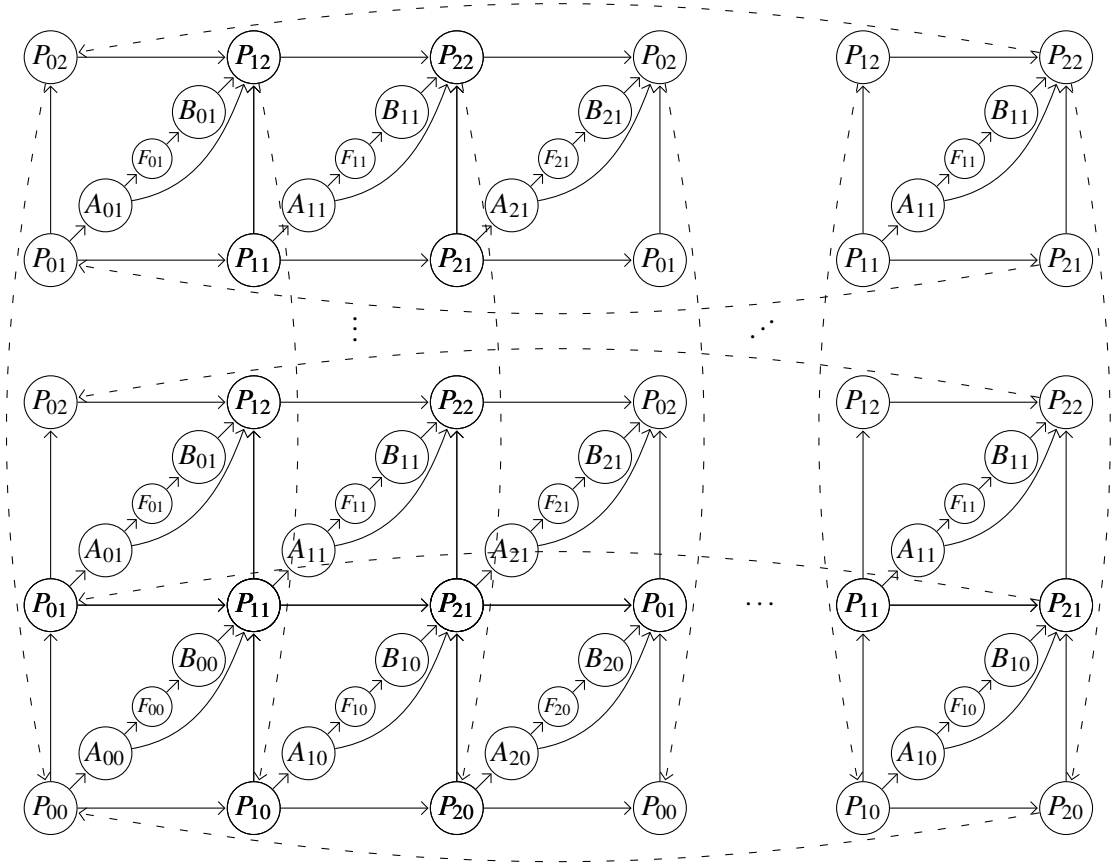

We now show that global $\klasa_{\Gamma_2}$-FINSAT is undecidable. 
In the proof we employ the second domino problem from Theorem \ref{doomino}.
Let ${\cal D} = (D,D_H,D_V)$ be a domino system and $\lambda^{{\cal D}}$ be
the modal encoding of ${\cal D}$ constructed in Section \ref{stare}. 
We define a modal formula $\tau_{2}$ such that $\tau_2 \wedge \lambda^{{\cal D}}$ has a finite $\klasa_{\Gamma_2}$ based model
if and only if ${\cal D}$ tiles $\Z_m \times \Z_m$, for some $m\in \N$. 
Actually, we simply put $\tau_2=\tau_1$.

\begin{lemma}
\label{lem2und}
Let ${\cal D} = (D,D_H,D_V)$ be a domino system and let $\tau_2 \wedge \lambda^{{\cal D}}$
be the formula constructed above. The tiling problem for ${\cal D}$ has a solution
if and only if $\tau_2 \wedge \lambda^{{\cal D}}$ has a finite $\klasa_{\Gamma_2}$ based model.    
\end{lemma}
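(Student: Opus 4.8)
The plan is to prove both directions of the equivalence, reusing the $\Gamma$-machinery of Section \ref{stare} as much as possible. For the ``only if'' direction I would take a square-torus tiling $t : \Z_m \times \Z_m \to D$ and label the explicit finite structure $\str G_2^m$ of Figure \ref{f:finite-model} accordingly, letting the $P$-world of cell $(i,j)$ carry the piece $t(i,j)$. It then remains to verify three things: that $\str G_2^m$ globally satisfies $\tau_2 = \tau_1$ (immediate from the cell layout together with the wrap-around edges), that it satisfies $\lambda^{\cal D}$ (immediate from correctness of $t$ on the torus), and that its frame satisfies $\Gamma_2$. For the last point, the only ramified worlds are those labelled $B_{ij}$, and since each such world is joined to precisely the worlds reachable from the upper-right corner $P_{(i+1)(j+1)}$ of its cell, the relation $R$ restricted to these worlds is reachability-closed, giving $\G{2}{trans}$. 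Because the torus diagonal closes up, each $B$-world is reachable from the corner generating its own successor set, hence is its own successor and therefore reflexive, so $\G{2}{refram}$ is witnessed and I must check $\G{2}{grid}$; this is the same computation as for $\str G_1$ in Lemma \ref{thm1und} (when $u$ is labelled $A_{ij}$ the relevant worlds lie in one cell, and when $u$ is labelled $B_{ij}$ every admissible predecessor $x$ is itself a ramified $B$-world, so $\G{2}{trans}$ delivers $xRz$).

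The ``if'' direction carries the real content, and its crux is to manufacture a reflexive ramified world out of finiteness alone. Given a finite $\klasa_{\Gamma_2}$-based model $\str M$ of $\tau_2 \wedge \lambda^{\cal D}$, the formula $\tau_2$ produces a nonempty supply of $B$-worlds (a $P_{00}$-world forces an $A_{00}$-, an $F_{00}$- and a $B_{00}$-world), and each $B_{ij}$-world reaches a $B_{(i+1)(j+1)}$-world through the four-step detour $P_{(i+1)(j+1)}, A_{(i+1)(j+1)}, F_{(i+1)(j+1)}$. Every $B$-world is ramified (it has the diagonal successor together with the six distinct $G$-successors, hence out-degree at least $7$), so $\G{2}{trans}$ collapses this detour into a single edge $B_{ij}\,R\,B_{(i+1)(j+1)}$. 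I thus obtain an infinite walk along $B$-worlds inside a finite structure; by pigeonhole it revisits some $B$-world, yielding a cycle of ramified worlds, and one further application of $\G{2}{trans}$ around that cycle makes a $B$-world reflexive. This is exactly where the finite and infinite cases part company: in the intended infinite model $\str G_1$ the analogous diagonal walk never closes, so $\G{2}{refram}$ is never triggered and the grid constraint stays dormant.

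Once a reflexive ramified world is present, the universal implication $\G{2}{refram}\Rightarrow\G{2}{grid}$ forces $\G{2}{grid}$ everywhere in $\str M$: the antecedent and the consequent share no variables, so a single witness for $\G{2}{refram}$ activates the consequent for all choices of $x,y,z,u$. From here I would replay the substructure argument of Lemma \ref{thm1und} almost verbatim: delete every world labelled $B_{ij}$, $F_{ij}$ or $G_{ij}^k$ to obtain $\str N$, which still globally satisfies $\tau$; note that $\G{2}{grid}$, being universal, descends to the substructure $\str N$; combine it with the anti-transitivity \eqref{eq:anti-trans} (read off from the admissible $\tau$-labellings present in $\str N$) to conclude $\str N \models \Gamma$; and observe that $\lambda^{\cal D}$ survives since it speaks only of $P$-worlds. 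Then $\str N$ is a finite $\klasa_\Gamma$-based model of $\tau \wedge \lambda^{\cal D}$, and the finite-model equivalence for $\Gamma$ recalled at the end of Section \ref{stare} (from \cite{lics12}) yields a square torus $\Z_m \times \Z_m$ tiled by $\cal D$. I expect the reflexive-world argument of the second paragraph to be the main obstacle: one must check carefully that $\G{2}{trans}$ genuinely contracts each $B$-to-$B$ detour (so the walk never leaves the ramified worlds) and that finiteness then forces the cycle. Everything after that point is a faithful transcription of the Section \ref{stare} and Lemma \ref{thm1und} calculations.
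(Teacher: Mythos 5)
Your proposal is correct and follows essentially the same route as the paper's proof: the same torus-based finite structure ${\str G}_2^m$ for the ``only if'' direction (your reachability-closure description of the $B$-world edges yields the same structure, since on the torus every world is reachable from any cell corner), and for the ``if'' direction the same key argument that $\G{2}{trans}$ contracts the $B$-to-$B$ detours into edges, so finiteness forces a cycle and hence a reflexive ramified world, which activates $\G{2}{grid}$ and allows the substructure-plus-anti-transitivity reduction to $\Gamma$. In fact your write-up of the pigeonhole/cycle step is somewhat more explicit than the paper's, which only asserts the existence of an infinite path in the finite substructure $\str M_B$.
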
  
\begin{proof}
$\Rightarrow$
Let $t:\Z_m\times \Z_m \rightarrow D$  be a tiling of $\Z_m\times \Z_m$
for some $m\in \N$ and let ${\str G}^m_s$ be a standard grid structure over 
$\{w_{ij}\mid i,j\in [0,m-1]\}$ that represents $t$. Specifically, for all 
$i,j\in [0,m-1]$ a world $w_{ij}$ in ${\str G}^m_s$  has tile $t(i,j)$, adjacent tiles 
$(w_{ij},w_{(i+1 \mod m)j})$,$(w_{ij},w_{i(j+1 \mod m)})$ 
are connected by an R-edge, and pairs of horizontal (vertical) adjacent tiles respect
the relation $D_H$ (respectively $D_V$).

We expand ${\str G}^m_s$ to a finite model ${\str G}_2^m \in \klasa_{\Gamma_2}$ of $\tau_2 \wedge \lambda^{{\cal D}}$
by adding to ${\str G}^m_s$ some worlds and edges. We label each world $w_{ij}$
by $P_{(i \mod 3)(j \mod 3)}$ and for each such a world we add to ${\str G}_2^m$ 
worlds $a_{ij}, b_{ij}, f_{ij},g_{ij}^l, e_{ij}^k$ for $i,j\in [0, m-1]$, $k\in [0,2]$, $l \in [0,5]$. Label these worlds:
$a_{ij}$ by predicate $A_{(i \mod 3)(j \mod 3)}$, $b_{ij}$ by $B_{(i \mod 3)(j \mod 3)}$, 
$f_{ij}$ by $F_{(i \mod 3)(j \mod 3)}$ etc.
We then connect appropriate worlds of ${\str G}_2^m$ by R edges, as prescribed by the 
formula $\tau_2$ and connect each $b_{ij}$ to each world of ${\str G}_2^m$ (including itself).
Observe that all ramified worlds in a structure ${\str G}_2^m$ are $b_{ij}$ for $i,j\in [0,m-1]$. Thus both formula $\tau_2 \wedge \lambda^{{\cal D}}$
and $\G{2}{trans}$ are satisfied.

In order to see that $\G{2}{refram} \Rightarrow \G{2}{grid}$, we will show that $\G{2}{grid}$ is satisfied.
The only worlds $w'\in {\str G}_2^m$ satisfying $deg_{\geq 4}(w')$ are $b_{ij}$ and $a_{ij}$. Only these worlds can 
be substituted for $u$ in $\G{2}{grid}$ to satisfy the condition of the implication. 
Assume we have substituted a $b_{ij}$ for $u$. 
The only predecessors of $b_{ij}$ are $f_{ij}$ and 
$b_{i'j'}$ for all $i',j'\in [0,m-1]$, but only $b_{i'j'}$ have degree at least
two, thus only $b_{i'j'}$ should be considered as possible substitutions for $x$ 
in $\G{2}{grid}$. Since $b_{i'j'}$ is connected to all other worlds we then have
$b_{i'j'} R u$ where $u$ is any world of ${\str G}_2^m$. Thus the implication $\G{2}{grid}$ is satisfied for
such substitutions. Similarly, when we substitute an $a_{ij}$ for $u$ we notice that the only
predecessors of $a_{ij}$ that have degree not less than two are $b_{i'j'}$ and $w_{(i \mod m)(j \mod m)}$.
So only $b_{i'j'}$ and $w_{(i \mod m)(j \mod m)}$ should be considered as possible substitutions for $x$ 
in $\G{2}{grid}$. The case of $b_{i'j'}$ we have considered before and the case of 
$w_{(i \mod m)(j \mod m)}$ just says that ${\str G}_2^m$ should locally look like a grid, what is already enforced
by grid ${\str G}_s^m$. Therefore $\G{2}{grid}$ is satisfied and we have found a finite model of
$\tau_2 \wedge \lambda^{{\cal D}}$ and $\Gamma_2$.  

\medskip
$\Leftarrow$ 
We prove that every finite model $\str M$ of $\Gamma_2$ and $\tau_2 \wedge \lambda^{\cal D}$ 
contains a model of $\Gamma$ and $\tau \wedge \lambda^{\cal D}$ as a substructure.
Existence of such a model implies existence of the required tiling.   
Since $\str M$ satisfies $\tau_2 \wedge \lambda^{\cal D}$, each world of $\str M$
labeled by $P_{ij}, E_{ij}^k$ for $i,j,k\in [0,2]$ satisfies
$\tau \wedge \lambda^{\cal D}$. Notice that $A_{ij}$ fails to satisfy $\tau \wedge \lambda^{\cal D}$
only because it has a neighbor labeled $F_{ij}$. Let $\str N$ be a substructure of $\str M$ 
obtained by removing words labeled by $B_{ij}, G_{ij}^l, F_{ij}$ for $i,j\in [0,2], l \in [0,5]$
from $\str M$. Thus $\str N$ globally satisfies $\tau \wedge \lambda^{\cal D}$. We now show
that is also satisfies $\Gamma$. To achieve this observe that $\str M$ models $\G{2}{grid}$.
It is because there exists a world in $\str M$ that satisfies $\G{2}{refram}$.
Precisely, let $\str M_B$ be a substructure of $\str M$ induced by worlds of $\str M$ 
satisfying some $B_{ij}$, where $i,j\in [0,2]$. Note that $\str M_B$ is nonempty.
In the structure $\str M$, for each world satisfying $B_{ij}$ one can reach a world 
satisfying some $B_{i'j'}$ following a path  of $R$ edges. Moreover each world satisfying
$B_{ij}$ has at least $7$ successors in $\str M$. Then, because of
$\G{2}{trans}$ there is an infinite path in the finite structure $\str M_B$. Thus 
some world $b$ satisfying $B_{ij}$ is reflexive and the world $b$ satisfies $\G{2}{refram}$.
Therefore $\str M$ satisfies $\G{2}{grid}$.       

Now notice that if  $\str M \models w_x R w_u \wedge  w_u R w_z \wedge w_x R w_z$
for some worlds $w_x,w_y,w_z \in M$ then at least one of these worlds is labeled
by $B_{ij}$ for some $i,j\in [0,2]$. Therefore the structure $\str N$ satisfies:
\begin{equation}
 x R u \wedge u R z \Rightarrow \neg x R z
\label{eq:anti-trans-second}
\end{equation}
Clearly, $\G{2}{grid}$ and \eqref{eq:anti-trans-second} imply $\Gamma$. Thus, $\str N$ satisfies $\Gamma$.
Hence, we have found a finite model $\str N \in \klasa_{\Gamma}$ of $\tau \wedge \lambda^{\cal D}$.

\end{proof}
Therefore we have the following
\begin{corollary}
$\klasa_{\Gamma_2}$-FINSAT is undecidable.
\end{corollary}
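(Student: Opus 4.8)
The plan is to obtain the corollary as an immediate consequence of Lemma~\ref{lem2und} together with the undecidability of the second domino problem from Theorem~\ref{doomino}, packaged as a single many--one reduction. The genuinely hard work---the forward construction and the backward extraction of models---has already been carried out inside Lemma~\ref{lem2und}, so at this stage the task is essentially bookkeeping: turning the stated model-theoretic equivalence into an undecidability statement about the decision problem $\klasa_{\Gamma_2}$-FINSAT.

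First I would check that the map ${\cal D} \mapsto \tau_2 \wedge \lambda^{\cal D}$ is computable. The formula $\tau_2 = \tau_1$ is a fixed modal formula independent of the input, and $\lambda^{\cal D} = \lambda_0 \wedge \bigwedge_{0 \le i,j \le 2}(\lambda^H_{ij} \wedge \lambda^V_{ij})$ is assembled mechanically from the finite relations $D_H, D_V$ exactly as in Section~\ref{stare}. Hence, from any description of a domino system ${\cal D}$ one can effectively produce the modal formula $\tau_2 \wedge \lambda^{\cal D}$, and this map is the reduction we will use.

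Next I would appeal to Lemma~\ref{lem2und}: there exists $m \in \N$ with ${\cal D}$ tiling $\Z_m \times \Z_m$ if and only if $\tau_2 \wedge \lambda^{\cal D}$ has a finite $\klasa_{\Gamma_2}$-based model, which is precisely the assertion that $\tau_2 \wedge \lambda^{\cal D}$ is a positive instance of global $\klasa_{\Gamma_2}$-FINSAT. Thus the effective map above sends positive instances of the tiling problem to positive instances of $\klasa_{\Gamma_2}$-FINSAT, and negative instances to negative instances; it is a correct many--one reduction.

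Finally, since Theorem~\ref{doomino}(ii) states that deciding the existence of such an $m$ is undecidable, any algorithm for $\klasa_{\Gamma_2}$-FINSAT could be composed with this reduction to decide the tiling problem, which is impossible; hence $\klasa_{\Gamma_2}$-FINSAT is undecidable. I do not expect a real obstacle here: the forward direction (expanding a torus tiling into the finite model ${\str G}_2^m$ and verifying $\Gamma_2$) and the backward direction (removing the $B_{ij}, F_{ij}, G_{ij}^l$ worlds, using the reflexivity of some $B$-world to fire $\G{2}{refram}\Rightarrow\G{2}{grid}$, and recovering a $\klasa_\Gamma$-based model of $\tau\wedge\lambda^{\cal D}$) are exactly the contents already proved in Lemma~\ref{lem2und}. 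The only point that warrants a line of care is the observation that ``finite $\klasa_{\Gamma_2}$-based model of $\tau_2\wedge\lambda^{\cal D}$'' coincides literally with the FINSAT instance, so that the equivalence of Lemma~\ref{lem2und} transports without loss to the decision problem.
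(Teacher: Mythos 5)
Your proposal is correct and follows exactly the paper's route: the paper derives this corollary immediately from Lemma~\ref{lem2und} combined with the undecidability of the periodic tiling problem (Theorem~\ref{doomino}(ii)), which is precisely the many--one reduction you describe. The additional bookkeeping you spell out (computability of ${\cal D} \mapsto \tau_2 \wedge \lambda^{\cal D}$ and the literal match between the lemma's statement and the FINSAT instance) is left implicit in the paper but is entirely sound.
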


\subsection{The decidability}

We say that a frame $\cal M$ is a \emph{quasi-tree} if there is a tree $\cal T$ over the same universe such that there is an edge from $w$ to $v$ in $\cal M$ if and only if there is such edge in $\cal T$ or $w$ is ramified and there is a path from $w$ to $v$ in $\cal T$. In other worlds, a quasi-tree looks like a tree except that ramified worlds are connected to all their descendants, not only successors.  We are ready to prove the following lemma.
\begin{lemma}
\label{thm2dec}
$\klasa_{\Gamma_2}$-SAT is decidable.
\end{lemma}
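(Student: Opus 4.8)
The plan is to prove a \emph{quasi-tree model property} for $\klasa_{\Gamma_2}$ and then to decide the existence of a quasi-tree model automata-theoretically. That is, I would show that a modal formula $\varphi$ is $\klasa_{\Gamma_2}$-satisfiable if and only if it has a model whose frame is a quasi-tree, and then reduce the existence of such a model to the nonemptiness problem for an automaton on infinite trees.

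First I would verify that \emph{every quasi-tree belongs to $\klasa_{\Gamma_2}$}. In a quasi-tree every edge leads from a world to one of its proper descendants in the underlying tree $\cal T$, so no world is its own successor; hence $\ram{v}\wedge vRv$, i.e.\ $\G{2}{refram}$, is false everywhere and the implication $\G{2}{refram}\Rightarrow\G{2}{grid}$ holds vacuously. Furthermore, if $x$ is ramified with $xRy$ and $yRz$, then $y$ is a proper descendant of $x$ and $z$ a proper descendant of $y$, so $z$ is a proper descendant of $x$; as $x$ is ramified it is joined to all its proper descendants, giving $xRz$. Thus $\G{2}{trans}$ holds and $\Gamma_2$ is satisfied.

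Next I would establish the quasi-tree model property by a tailored unraveling. Given $\str M\in\klasa_{\Gamma_2}$ with $\str M, w_0\models\varphi$ (or $\str M\models\varphi$ in the global case), I would take the standard tree-unraveling $\cal T$ of $\str M$ from $w_0$ and build a frame $\cal Q$ by additionally joining each node whose endpoint is ramified in $\str M$ to all of its proper descendants in $\cal T$. The crucial observation is that $\G{2}{trans}$ makes these extra edges harmless: if $w$ is ramified and $w R w_1 R \cdots R w_k$, then iterating $\G{2}{trans}$ gives $w R w_k$, so every descendant of an unravelled ramified node corresponds to a genuine $R$-successor of its endpoint in $\str M$. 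Hence the endpoint map is a bisimulation between $\cal Q$ and $\str M$ and $\cal Q$ satisfies $\varphi$ at the relevant worlds; and since the nodes of out-degree $\geq 7$ in $\cal Q$ are exactly those with ramified endpoints, $\cal Q$ is genuinely a quasi-tree and so, by the previous paragraph, lies in $\klasa_{\Gamma_2}$.

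Finally I would decide whether $\varphi$ has a quasi-tree model by a tree automaton reading the labelled trees $\cal T$. At each node the automaton guesses a type and whether the node is ramified (the latter constrained by the number of successors it generates); for a non-ramified node the modalities range over the tree-children, while for a ramified node every $\square$-formula must be propagated to \emph{all} descendants as a safety demand carried down the subtree, and every $\Diamond$-formula becomes an eventuality to be discharged somewhere below. A state then records the current type, the $\square$-demands inherited from ramified ancestors, and the pending eventualities, all of them subsets of the subformulas of $\varphi$ and hence finitely many, with a B\"uchi/parity acceptance condition guaranteeing that no eventuality of a ramified world is postponed forever. The hard part will be precisely this eventuality handling: because a ramified world sees its entire descendant cone, its $\Diamond$-subformulas behave like eventualities of a transitive modality, and the delicate point is to prove that discharging all of them along every branch does yield a genuine quasi-tree model (and conversely); once this correctness is established, decidability follows from the decidability of automaton nonemptiness.
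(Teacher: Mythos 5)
Your proposal is correct, and its first half coincides with the paper's own argument: both establish a quasi-tree model property by unraveling a model of $\varphi$ and $\Gamma_2$ and then reconnecting every node with ramified endpoint to all of its descendants. Your bisimulation argument via iterating $\G{2}{trans}$ (so that every descendant of an unravelled ramified node is a genuine $R$-successor of its endpoint) is exactly the justification that the paper compresses into ``a quick check,'' and your observation that quasi-trees are irreflexive, so $\G{2}{refram}$ and hence the implication $\G{2}{refram}\Rightarrow\G{2}{grid}$ holds vacuously, is the intended reason they lie in $\klasa_{\Gamma_2}$. Where you genuinely diverge is the decision procedure. The paper runs an alternating polynomial-space algorithm: it guesses at most $|\varphi|$ successor types per node, requires every $\Diamond\psi$ of the current world --- ramified or not --- to be witnessed by an \emph{immediate} child, carries the $\square$-demands of ramified ancestors down the branch, and accepts once a step counter reaches $2^{|\varphi|}+1$, at which point a type has repeated along the branch and the construction can be prolonged forever. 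This sidesteps what you call the hard part: there are no pending eventualities and no fairness condition, at the price of an implicit normalization claim, namely that a $\Diamond$-witness of a ramified world, which in a quasi-tree may be a deep descendant, can always be promoted to an immediate child (this is justified by duplicating the witness's subtree; the $\square$-demands of all ramified ancestors are type-based and already hold at the witness, so the copy violates nothing). Your B\"uchi/parity tree automaton is instead faithful to the quasi-tree semantics, treating $\Diamond$-formulas of ramified worlds as deferred eventualities in the style of automata for transitive modal logics; this is a sound plan and yields decidability from automaton nonemptiness, but the eventuality-discharge correctness you rightly flag as delicate is precisely the work the paper's immediate-discharge trick avoids, and the \APSPACE{} (hence \EXPTIME{}) upper bound that the paper reads off its algorithm is somewhat less immediate when extracted from emptiness of an exponential-size automaton. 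Both routes work; the paper's is the more elementary, yours the more modular and semantically transparent.
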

\begin{proof}
First, we show that modal logic over $\klasa_{\Gamma_2}$ has a quasi-tree model property, i.e. each satisfiable formula has a model based on a quasi-tree. Then, we describe an \APSPACE{} algorithm that solves $\klasa_{\Gamma_2}$-SAT by checking if a given formula has a quasi-tree model.

Let $\str M$ be a model of $\varphi$ and $\Gamma_2$ and $\str M'$ be result of its unraveling (see \cite{BdRV01}). Clearly, $\str M'$ satisfies $\varphi$ but not necessarily $\Gamma_2$. We define $\str N$ as a model that contains $\str M'$ and additional edges --- we connect all ramified worlds with all their descendants. A quick check shows that $\str N$ is a model of $\varphi$ and $\Gamma_2$ based on a quasi-tree.

Now we describe an alternating algorithm that verifies existence of a model based on a quasi-tree. It guesses a model starting from root, and keeps the information about the subformulas satisfied by ramified worlds along the path. 
First, it guesses a type of an initial world.  Then recursively, it guesses at most $|\varphi|$ types of successors of a current world, verifies that they are consistent with all subformulas of the form $\square \psi$ satisfied by a current world and all ramified worlds above, and to guarantee that all subformulas of the form $\Diamond \psi$ are satisfied in the current world. Then, it calls itself universally for each successor.  After $2^{|\varphi|}+1$ steps, we know that the algorithm visited some type twice. So we keep a counter of steps and, when it reaches $2^{|\varphi|}+1$, the algorithm accepts.

Clearly, the described algorithm needs only polynomial space, proving the membership of $\klasa_{\Gamma_2}$-SAT in \APSPACE{}.
We claim that this bound is tight, but we skip the lower bound proof.
\end{proof}

\section{Conclusion and future work}
We showed that the decidability of the global satisfiability problem of elementary modal logics may vary on the decision whether we consider only finite structures or not. 

Of course, $\Gamma_1$ and $\Gamma_2$ are not that only formulas with the desired properties. We can easily show that there are infinitely many such formulas.

\begin{proposition}
There is infinitely many non-equivalent universal first-order formulas $\Phi$ such that $\klasap$-SAT is decidable and $\klasap$-FINSAT is undecidable, and infinitely may non-equivalent universal first-order formulas $\Phi$ such that $\klasap$-SAT is undecidable and $\klasap$-FINSAT is decidable.
\end{proposition}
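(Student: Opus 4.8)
The plan is to promote the two single witnesses $\Gamma_1$ and $\Gamma_2$ into two infinite families by varying the ramification threshold. Both formulas are built from the macro $\ram{\cdot} = deg_{\geq 7}(\cdot)$, which declares a world ramified as soon as its out-degree reaches $7$. For every $n \geq 7$ I would let $\Gamma_1^{(n)}$ and $\Gamma_2^{(n)}$ be obtained from $\Gamma_1$ and $\Gamma_2$ by replacing every occurrence of $deg_{\geq 7}$ (equivalently, of $\ram{\cdot}$) by $deg_{\geq n}$, while leaving the fixed thresholds $deg_{\geq 2}$ and $deg_{\geq 4}$ inside $\G{1}{grid}=\G{2}{grid}$ untouched. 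Thus $\Gamma_i^{(7)} = \Gamma_i$, so each family contains the original witness, and it remains to show that the decidability status is preserved and that the formulas are pairwise non-equivalent.

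First I would argue that the decidability status is insensitive to $n$. For the decidability halves (Lemmas \ref{thm1dec} and \ref{thm2dec}) nothing but the constant $7$ changes: the collapsing argument for $\klasa_{\Gamma_1^{(n)}}$-FINSAT uses only that $R$ is irreflexive and transitive on ramified worlds, and the alternating procedure merely rejects a node with more than $n-1$ successors whose type differs from the guessed type; likewise the quasi-tree construction and the \APSPACE{} algorithm for $\klasa_{\Gamma_2^{(n)}}$-SAT refer to ramification only through the threshold. For the undecidability halves (Lemmas \ref{thm1und} and \ref{lem2und}) I would reuse the grid formula $\tau_1$ but enlarge the family $G_{ij}^l$ so that each world labelled $B_{ij}$ is forced to have at least $n$ successors instead of $7$; the grids $\str G_1$ and $\str G_2^m$ already make these worlds have unbounded (resp.\ full) out-degree, so they remain ramified, while the auxiliary worlds $A_{ij}$ keep out-degree $5 < n$ and stay non-ramified. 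With this single adjustment the proofs that every model of $\Gamma_i^{(n)}$ and $\tau_i$ is grid-like, and the anti-transitivity reductions, go through verbatim.

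Next I would establish pairwise non-equivalence by exhibiting, for each pair $n' < n$, a frame separating the two classes. For the first family take the frame consisting of two distinct reflexive worlds, each carrying exactly $k$ distinct out-edges with $n' \le k < n$, all their remaining out-neighbours being sinks. Under threshold $n'$ both worlds are ramified, so $\G{1}{two}$ is satisfiable and the consequent forces $\G{1}{irr}=\neg xRx$, which fails at the reflexive worlds; hence the frame violates $\Gamma_1^{(n')}$. Under threshold $n$ no world reaches out-degree $n$, so $\G{1}{two}$ is never met and the implication holds vacuously, placing the frame in $\klasa_{\Gamma_1^{(n)}}\setminus\klasa_{\Gamma_1^{(n')}}$. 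For the second family take a single non-reflexive world $w$ of out-degree $k$, $n'\le k<n$, together with $y,z$ such that $wRy$, $yRz$ and $\neg wRz$, again with all other out-neighbours sinks. Under threshold $n'$ the world $w$ is ramified and violates $\G{2}{trans}$, so the frame fails $\Gamma_2^{(n')}$; under threshold $n$ no world is ramified, so $\G{2}{trans}$ and the implication $\G{2}{refram}\Rightarrow\G{2}{grid}$ are both vacuous, placing the frame in $\klasa_{\Gamma_2^{(n)}}\setminus\klasa_{\Gamma_2^{(n')}}$. Since such separating frames exist for every $n'<n$, each family is infinite and pairwise non-equivalent, which proves the proposition.

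The main obstacle I expect is not the non-equivalence, which is a routine frame construction, but the bookkeeping in the undecidability step: one must verify that inflating the out-degree requirement on the $B_{ij}$-worlds does not introduce new ramified worlds elsewhere that could spuriously trigger $\G{1}{trans}$ or $\G{2}{grid}$, and that the anti-transitivity identities \eqref{eq:anti-trans} and \eqref{eq:anti-trans-second} remain valid for the enlarged labelling. Because these properties depend only on the types introduced by $\tau_i$ and on $A_{ij}$ remaining strictly below the raised threshold $n$, the verification reduces to re-reading the proofs of Lemmas \ref{thm1und} and \ref{lem2und} with $7$ replaced by $n$ and the predicates $G_{ij}^l$ resized accordingly.
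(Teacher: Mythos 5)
Your proposal is correct, but it takes a genuinely different route from the paper's own proof. The paper keeps the ramification threshold $7$ fixed and instead conjoins $\Gamma_i$ with a clique-exclusion axiom $\Lambda_n = \forall x_1 \dots x_n . \bigvee_{i \neq j} (\neg x_i R x_j \vee x_i = x_j \vee x_i R x_i)$, arguing that all models considered in the four lemmas (the grids, the quasi-trees, the collapsed models with at most one ramified world) satisfy $\Lambda_n$ for $n>2$, so both the decidability and the undecidability arguments transfer to $\klasa_{\Gamma_i \wedge \Lambda_n}$ without touching any gadget; infinitude then follows from non-equivalence of the $\Lambda_n$'s. That is cheaper than your plan: you must re-verify Lemmas \ref{thm1dec}, \ref{thm1und}, \ref{thm2dec} and \ref{lem2und} parametrically in $n$ and resize the $G_{ij}^l$-family of $\tau_1$, whereas the paper reuses its lemmas as black boxes. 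On the other hand, your re-verification is sound --- the points you isolate (the decidability proofs use the threshold only generically; the undecidability proofs need only that the $B_{ij}$-worlds cross the threshold, that the $A_{ij}$-worlds stay strictly below it, and that the anti-transitivity identities depend only on the mod-$3$ labelling) are exactly the ones that matter --- and your route buys something the paper's sketch does not deliver: explicit separating frames, hence an airtight pairwise non-equivalence argument. For the paper's families this step is in fact delicate: $\Gamma_i$ by itself already forbids two-way irreflexive cliques among ramified worlds (transitivity on ramified worlds would force such a clique member to be reflexive), so for $n \geq 8$ the axiom $\Lambda_n$, read literally over ordered pairs, is entailed by $\Gamma_i$ and $\Gamma_i \wedge \Lambda_n$ is equivalent to $\Gamma_i$; the paper's one-sentence justification does not address this. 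Your threshold-varying families, with their concrete witnesses in $\klasa_{\Gamma_i^{(n)}} \setminus \klasa_{\Gamma_i^{(n')}}$ for every $n' < n$, sidestep that issue entirely.
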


Consider a formula $\Lambda_n = \forall x_1 \dots x_n. \bigvee_{i \neq j} (\neg x_i R x_j \vee x_i = x_j \vee x_i R x_i)$ stating  that a structure do not contain any (irreflexive) clique of size $n$. It is not hard to see that all the models considered in this paper satisfy $\Lambda_n$ for any $n>2$. Therefore, for any $n>2$, the formula $\Gamma_1 \wedge \Lambda_n$ defines an undecidable logic that is finitely decidable, and the formula $\Gamma_2 \wedge \Lambda_n$ defines decidable logic that is finitely undecidable. Therefore, in both cases, the number of such formulas is infinite.

There are two natural questions that concern local satisfiability.\\ \indent 
\emph{ Question 1}.  Is there a universal first-order formula that defines an elementary modal logic with undecidable local satisfiability problem and decidable finite local satisfiability problem?\\
\indent\emph{ Question 2}.  Is there a universal first-order formula that defines an elementary modal logic with decidable local satisfiability problem and undecidable finite local satisfiability problem?

Equality plays a crucial role in our proofs. We do not know, however, whether it is necessarily. \\ \indent  \emph{ Question 3}. Can our results be proved without using equality?

We have seen that assuming finiteness can change the decidability of the logic. Another interesting question would be ``How much can the complexity differ, if both problems are decidable?'' However, there is even a simpler question that we would like to state.\\ \indent \emph{ Question 4}. Is there a universal first-order formula $\Phi$ such that the global satisfiability problem of modal logic over $\klasap$ is \NEXPTIME{}-hard and decidable?  How about local satisfiability, finite global satisfiability and finite local satisfiability problems?
\bibliographystyle{eptcs}
\bibliography{all}
\end{document}